\numberwithin{equation}{section}
\newtheorem{theorem}{Theorem}[section]
\newtheorem{lemma}[theorem]{Lemma}
\newtheorem{remark}[theorem]{Remark}
\newtheorem{definition}[theorem]{Definition}
\newtheorem{conjecture}[theorem]{Conjecture}
\newtheorem{claim}[theorem]{Claim}
\DeclareMathOperator{\ric}{Ric}
\DeclareMathOperator{\hess}{Hess}
\DeclareMathOperator{\tr}{tr}
\DeclareMathOperator{\tf}{tf}
\DeclareMathOperator{\divergence}{div}
\DeclareMathOperator{\Span}{span}
\DeclareMathOperator{\graph}{graph}
\title[Penrose theorem and horizon topology in weighted spacetimes]{The Penrose singularity theorem, MOTS stability, and horizon topology in weighted spacetimes}
\author{Eric Ling}
\address[Eric Ling]{Copenhagen Centre for Geometry and Topology (GeoTop), Department of Mathematical Sciences, University of Copenhagen, Copenhagen, Denmark DK-2100}
\email{el@math.ku.dk}
\author{Argam Ohanyan}
\address[Argam Ohanyan]{Department of Mathematics, University of Toronto, Toronto, Ontario, Canada M5S 2E4}
\email{argam.ohanyan@utoronto.ca}
\author{Eric Woolgar}
\address[Eric Woolgar]{Department of Mathematical and Statistical Sciences and Theoretical Physics Institute, University of Alberta, Edmonton, Alberta, Canada T6G 2N8}
\email{ewoolgar@ualberta.ca}
\begin{document}

\date{\today}

\begin{abstract}
We consider versions of the Penrose singularity theorem and the Hawking horizon topology theorem in weighted spacetimes that contain weighted versions of trapped surfaces, for arbitrary spacetime dimension and synthetic dimension. We find that suitable generalizations of the unweighted theorems hold under a weighted null energy condition. Our results also provide further evidence in favour of a weighted scalar curvature that differs from the trace of the weighted Ricci curvature. When the synthetic dimension is a positive integer, these weighted curvatures have a natural interpretation in terms of warped product metrics.
\end{abstract}

\maketitle

\section{Introduction}\label{section: intro}
\setcounter{equation}{0}

\noindent The major singularity theorems of general relativity due to Penrose and Hawking, described in Hawking--Ellis \cite[Chapter 8]{HE} and proved more than 50 years ago, give conditions on the geometry of spacetime that imply some form of causal geodesic incompleteness. This does not provide much information about the nature of spacetime singularities. It leaves open the question of whether such a singularity is mild enough to permit spacetime to be extended past it with perhaps only a slightly loss of smoothness of the differentiabile structure.

Recent work has provided greater insight into these questions. It is now known that some spacetimes cannot be extended, at least not without significant loss of smoothness \cite{Sbierski2018}. A fascinating development has been the work of Cavalletti--Mondino \cite{CM}, which shows that the Hawking cosmological singularity theorem holds in the very weak setting of a Lorentzian length space that need not have any differentiable structure at all. This theorem uses a ``synthetic'' version of the strong energy condition, derived from the theory of optimal transport. The techniques used to prove the theorem draw heavily on measure theory. Indeed, when the structure provided by the Lorentzian metric and its associated volume form cannot be assumed, more basic structures are still necessary, including assumed causal relations and reference measures.

To make further progress, a synthetic notion of the null energy condition is desirable. Versions of those have been put forward by McCann \cite{McCann}, Ketterer \cite{Ketterer} and Cavalletti--Manini--Mondino \cite{CMM1, CMM2}. Ketterer was able to use the proposed synthetic null energy condition to prove Penrose's black hole singularity theorem and the Hawking monotonicity formula in the smooth setting, and to obtain results for a weighted family of synthetic conditions known as synthetic Bakry--\'Emery null energy conditions. Meanwhile, the approach of Cavalletti--Manini--Mondino extends to the synthetic setting, where those authors are able to obtain a version of the Penrose singularity theorem for $C^0$-spacetime metrics on smooth manifolds, assuming a Bakry--\'Emery type null energy condition.

In the setting of smooth manifolds, the Bakry--\'Emery Ricci curvature can be described as a variant of the familiar Ricci curvature that incorporates the Hessian of a \emph{weight function} $f$. Precisely, the smooth $m$-Bakry--\'Emery Ricci curvature is
\begin{equation}
\label{eq1.1}
\ric^{f,m}:=\ric +\hess f -\frac{1}{m}df\otimes df,
\end{equation}
for some $f\in C^{\infty}(M)$ and some $m\in {\mathbb R}\cup \{\infty \}$, where the $m=0$ case is often interpreted as being the case of $f=const$ (so $\ric^{f,0}:=\ric$) and $m=\infty$ to mean $\ric^{f,\infty}:=\ric +\hess f$ (note that $\ric^{f,\infty}= cg$ is the gradient Ricci soliton equation). Since the $m=0$ case is standard, we will omit it from our considerations in this paper.

Recall that the Penrose singularity theorem uses the notion of a \emph{closed trapped surface}, i.e., a closed surface in which both null expansions are negative. In our setting, we will use the notion of a \emph{closed $f$-trapped surface} in which the null expansions are modified by $f$; see Section \ref{section2}. A Penrose-type singularity theorem, assuming nonnegativity of the $m$-Bakry--\'Emergy Ricci curvature in the $m=\infty$ case, was proved by Case \cite{Case}. A version of Galloway's null splitting theorem \cite{Galloway1} (which can be interpreted as the rigidity of Penrose's singularity theorem), assuming only a bound on the $m$-Bakry--\'Emery Ricci curvature along null directions (for any $m$), was proved in \cite{WW}.
But in the classical smooth setting, a version of the Penrose singularity theorem assuming only a bound on the $m$-Bakry--\'Emery Ricci curvature along null directions (for any $m$) seems not to have appeared.

In this paper we revisit the analysis in the smooth case (also called the classical case) of this condition. First, we give a proof of the $m$-Bakry--\'Emergy version of the Galloway null splitting theorem of \cite{WW} using a direct argument by utilizing the ellipticity of the Bakry--Émery null mean curvature operator (more closely following Galloway's argument, rather than the conformal techniques utilized in \cite{WW}).
Then we establish the Penrose singularity theorem in the classical case for (almost) all values of $m$, i.e., we prove the following.

\begin{theorem}[Bakry--\'Emery Penrose singularity theorem]\label{theorem1.1}
Let $(M,g)$ be a globally hyperbolic spacetime of dimension $\dim M = n \geq 3$ with noncompact Cauchy surfaces, $f \in C^2(M)$ and $m \in (-\infty, 2 - n] \cup (0,\infty]$. Suppose that $(M,g)$ satisfies the Bakry--\'Emery null energy condition $\ric^m_f(\ell,\ell) \geq 0$ for all null vectors $\ell \in TM$ and contains a compact spacelike codimension-$2$ surface $S$ that is future $f$-trapped; i.e., $\theta^\pm_f|_S < 0$ (where the $\pm$ sign denotes the two future-null directions orthogonal to $S$). Then:
\begin{itemize}
\item [i)] If $m>0$  then $(M,g)$ is future null geodesically incomplete.
\item [ii)] If $m\le 2-n$ or $m=\infty$ then $(M,g)$ is future null geodesically $f$-incomplete (see Definition \ref{definition2.3} or \cite[Definition 1.3]{WW}), and is geodesically incomplete if $f$ is bounded above.
\end{itemize}
\end{theorem}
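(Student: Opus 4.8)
The plan is to run Penrose's original argument, replacing its Raychaudhuri focusing step by a weighted version adapted to the $m$-Bakry--\'Emery tensor and, when $m\notin(0,\infty)$, to the $f$-affine reparametrization. I would fix $p\in S$ and one of the two future null normals, let $\gamma$ be the affinely parametrized null geodesic from $p$ in that direction, and let $\theta,\sigma$ be the expansion and shear of the (hypersurface-orthogonal, hence twist-free) null congruence emanating orthogonally from $S$. With $\phi:=\ell(f)=(f\circ\gamma)'$, so that $\hess f(\ell,\ell)=\phi'$ along $\gamma$, the classical Raychaudhuri equation and the hypothesis $\ric^m_f(\ell,\ell)\ge 0$ give, for the weighted expansion $\theta_f:=\theta-\phi$ and after discarding $|\sigma|^2\ge 0$,
\[
\theta_f'\;\le\;-\frac{\theta^2}{n-2}\;-\;\frac{\phi^2}{m}\qquad(\tfrac1m:=0\ \text{if}\ m=\infty).
\]
When $m\in(0,\infty)$ the term $-\phi^2/m$ is favourable; substituting $\theta=\theta_f+\phi$ and maximizing the resulting quadratic in $\phi$ yields the affine-parameter inequality $\theta_f'\le-\theta_f^2/(m+n-2)$ with $m+n-2>0$. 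For all admissible $m$ I would instead pass to the $f$-affine parameter $s$ of Definition~\ref{definition2.3}, $ds/d\lambda=e^{-2f/(n-2)}$, and to $\Theta:=e^{2f/(n-2)}\theta_f$; substituting $\theta=\theta_f+\phi$ (the cross term cancels for this choice of weight) gives
\[
\frac{d\Theta}{ds}\;\le\;-\frac{\Theta^2}{n-2}\;-\;\Bigl(\tfrac1{n-2}+\tfrac1m\Bigr)e^{4f/(n-2)}\phi^2 ,
\]
and the decisive observation is that $\tfrac1{n-2}+\tfrac1m\ge 0$ holds \emph{exactly} for $m\in(-\infty,2-n]\cup(0,\infty]$, so on this range $d\Theta/ds\le-\Theta^2/(n-2)$.

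Next (focusing): since $S$ is compact and $\theta_f^\pm|_S<0$, there is $\delta>0$ with $\theta_f^\pm|_S\le-\delta$, and since $\theta_f^\pm$ and $e^{2f/(n-2)}$ are continuous on the compact $S$, also $\Theta(0)\le-\delta'<0$ uniformly. Riccati comparison with $u'\le-u^2/(n-2)$ then shows $\Theta$ blows down to $-\infty$ within $s$-parameter $\le(n-2)/\delta'$, and when $m\in(0,\infty)$ that $\theta_f$ blows down to $-\infty$ within affine parameter $\le(m+n-2)/\delta$. On the resulting compact initial geodesic segment $\phi$ is bounded and $e^{2f/(n-2)}$ is bounded above and below, so $\theta=\theta_f+\phi$ itself tends to $-\infty$, and the standard focusing lemma produces a focal point of $S$ along $\gamma$ at that parameter value. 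If $\gamma$ were $f$-complete (resp.\ complete), its $s$-domain (resp.\ affine domain) would be $[0,\infty)$, so this focal parameter would lie strictly interior to the domain.

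Finally (Penrose topological contradiction): suppose, for contradiction, that in case~(i) every future null geodesic is complete, or in case~(ii) every inextendible future null geodesic is $f$-complete. Running the previous step for both null normals and all $p\in S$, every generator of the achronal boundary $\partial I^+(S)=\partial J^+(S)$ (using closedness of $J^+(S)$ in the globally hyperbolic $M$) carries a focal point of $S$ at uniformly bounded parameter: affine parameter $\le\lambda_0:=(m+n-2)/\delta$ in case~(i), $f$-affine parameter $\le s_0:=(n-2)/\delta'$ in case~(ii). Since a generator carries no focal point strictly before its future endpoint, $\partial I^+(S)$ is the continuous image of the compact set $S\times[0,\lambda_0]\times\{+,-\}$ under the normal exponential map (in case~(ii), of $S\times[0,s_0]\times\{+,-\}$ under the same map reparametrized by $s$, still continuous with compact domain because $s\mapsto\lambda$ is a diffeomorphism onto the geodesic's domain), hence $\partial I^+(S)$ is compact. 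But $\partial I^+(S)$ is a closed achronal topological hypersurface without boundary, and projecting it along the timelike flow in the splitting $M\cong\mathbb{R}\times\Sigma$ gives a continuous injection into a Cauchy surface $\Sigma$ whose image is open (invariance of domain), hence — being compact with $\Sigma$ connected — all of $\Sigma$. Thus $\Sigma$ is compact, contradicting noncompactness of the Cauchy surfaces. This gives~(i) and the first assertion of~(ii); for the last clause of~(ii), if $f\le C$ then $e^{-2f/(n-2)}\ge e^{-2C/(n-2)}>0$, so an $f$-incomplete null geodesic has finite affine length and the $f$-incomplete geodesic just produced is therefore affinely incomplete.

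The one genuinely delicate point I anticipate is the first step: guessing the correct reparametrization $ds=e^{-2f/(n-2)}d\lambda$ together with the weight $\Theta=e^{2f/(n-2)}\theta_f$ that convert the $m$-Bakry--\'Emery Raychaudhuri inequality into a clean $\Theta$-focusing inequality, and verifying that the sign of the residual $\phi^2$-term is precisely what singles out the range $m\in(-\infty,2-n]\cup(0,\infty]$ (and, separately, that for $m\in(0,\infty)$ one can stay in the affine parameter). Everything afterwards is the classical Penrose argument, the only extra care being to keep affine and $f$-affine parameters apart and to note that the normal exponential map reparametrized by $s$ is still continuous with compact domain.
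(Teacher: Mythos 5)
Your proposal is correct and follows essentially the same route as the paper: the identity $\tfrac{\theta^2}{n-2}+\tfrac{\phi^2}{m}\ge\tfrac{\theta_f^2}{n+m-2}$ for $m>0$, and for general $m$ the conformal weight $\Theta=e^{2f/(n-2)}\theta_f$ with the reparametrization $ds=e^{-2f/(n-2)}d\lambda$, are exactly the paper's \eqref{eq2.7}--\eqref{eq2.13}, and the sign condition $\tfrac1m+\tfrac1{n-2}\ge 0$ is precisely what singles out $m\in(-\infty,2-n]\cup(0,\infty]$ there as well. The only difference is that you spell out the Penrose compactness contradiction for $\partial J^+(S)$, which the paper delegates to Hawking--Ellis; your handling of it (including the passage back from blow-up of $\Theta$ in the $s$-parameter to a genuine focal point at finite affine parameter, and the bounded-above case of part (ii)) is sound.
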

We are also able to prove a one-sided version; see Theorem \ref{theorem3.5}.

The $m=2-n$ case of Theorem \ref{theorem1.1} is in fact just the usual Penrose singularity theorem. To see this, observe from formula \eqref{eqB.7} of Appendix \ref{appendixB} that the conformal transformation ${\tilde g}=e^{-2f/(n-2)}$, together with the replacement of $m=0$ by $m=2-n$, preserves the null energy condition in the sense that
\begin{equation}
\label{eq1.2}
\ric_{\tilde g}(\ell,\ell)= \ric_g^{f,2-n}(\ell,\ell)
\end{equation}
for any null vector $\ell$. Now consider the null mean curvature $\theta$ defined by a congruence of null geodesics with tangent field $\ell$ issuing from a closed trapped surface $\Sigma$. Under this conformal transformation, we have $\theta \mapsto\theta-\nabla_{\ell} f =\theta_f$. Now say that $f$ is bounded above. Then null geodesic incompleteness implies null geodesic $f$-incompleteness; see Remark \ref{remark2.4}. The converse is true if $f$ is bounded below. In this sense, Theorem \ref{theorem1.1} with $m=2-n$ can be regarded as a conformally invariant statement of the usual Penrose theorem.

If the null expansion of a closed surface $\Sigma$ defined by the future-outbound null geodesics is identically zero, then we say the surface is \emph{marginally trapped}. It is \emph{outermost} if there are no trapped surfaces or marginally trapped surfaces outside of and homologous to $\Sigma$. Marginally trapped surfaces admit a well-known notion of stability, i.e., it is \emph{stable} if the first eigenvalue of its stability operator is nonnegative. It's not hard to see that stability implies outermost. Similarly, there are analogous notions of outermost and stable for \emph{marginally $f$-trapped} surfaces. The second variation formula for marginally $f$-trapped surfaces was studied in \cite{RW} for the $m=\infty$ case. There it was found that the constraints on the topology of marginally outer $f$-trapped surfaces that follow in the $m=0$ case do not follow in the $m=\infty$ case unless further restrictive assumptions are made. We revisit this question here, and present the second variation formula for marginally $f$-trapped surfaces for arbitrary $m$. We are able to improve the result of \cite{RW}, both by generalizing in $m$ and by using a different version of the weighted dominant energy condition which we call the \emph{$m$-Bakry--\'Emery dominant energy condition}; see equation \eqref{eq4.7} and below it for a precise definition. We also correct \cite{RW} by including a term in the calculation that was overlooked in \cite{RW}.

\begin{theorem}\label{theorem1.2}
Let $\Sigma$ be an outermost marginally $f$-trapped hypersurface, $\dim\Sigma \geq 2$, in an initial data set $(V,g\vert_V,K)$ within a weighted spacetime $(M,g,f)$ such that
\begin{itemize}
\item [a)] the $m$-Bakry--\'Emery dominant energy condition holds for some $m \in (0,\infty]$,
\item [b)] $\nabla_{\ell}f\vert_{\Sigma}=0$.
\end{itemize}
Then there is a Riemannian metric $h$ on $\Sigma$ and function $\tilde{f} \in C^\infty(\Sigma)$ such that the  $m$-Bakry--\'Emery scalar curvature
\begin{equation}
\label{eq1.3}
R^{\tilde{f},m}(h):=R_h+2\Delta_h \tilde{f} -\frac{(m+1)}{m}|d\tilde{f}|^2_h ,
\end{equation}
is positive unless the $m$-Bakry--\'Emery scalar curvature of the induced metric $g|_\Sigma$ and induced weight function $f|_\Sigma$ is identically zero.
\end{theorem}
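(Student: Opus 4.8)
The plan is to follow the classical Galloway–Schoen strategy for marginally outer trapped surfaces, adapted to the weighted setting. First I would write down the second variation (stability) operator for a marginally $f$-trapped surface $\Sigma$: deforming $\Sigma$ in the null direction $\ell$ with lapse $\phi$, the linearization of $\theta_f$ has the form $L_f\phi = -\Delta_\Sigma\phi + 2\langle X, \nabla\phi\rangle + \bigl(Q - |X|^2 + \divergence X\bigr)\phi$ for a suitable vector field $X$ on $\Sigma$ and a scalar $Q$ built from the ambient curvature, the second fundamental form of $\Sigma$ in $V$, and the $f$-terms. Here the key point, and the place where the correction to \cite{RW} enters, is to compute $Q$ correctly for general $m$, pulling the $-\frac1m(df\otimes df)(\ell,\ell)$ contribution out of the $m$-Bakry–\'Emery null energy term via the Gauss equation and the Raychaudhuri equation for the $f$-modified expansion. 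Outermostness then gives that the principal eigenvalue $\lambda_1(L_f)$ is $\geq 0$ (a trapped or marginally trapped surface just outside $\Sigma$ would contradict outermostness), with principal eigenfunction $\phi>0$.

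Next I would symmetrize. Since $L_f$ is not self-adjoint, I pass to the operator obtained by conjugation/integration against $\phi$: for the principal eigenfunction $\phi>0$ one has $L_f\phi = \lambda_1\phi$, and testing against $\phi$ and integrating by parts produces, for every $\psi\in C^\infty(\Sigma)$,
\begin{equation}
\label{eq:stab}
\lambda_1\int_\Sigma \psi^2\,dA \;\leq\; \int_\Sigma \Bigl(|\nabla\psi|^2 + \tfrac12 Q_0\,\psi^2\Bigr)\,dA,
\end{equation}
where $Q_0$ is the part of $Q$ that survives after using $\divergence X - |X|^2 \leq \tfrac14\nabla$-type manipulations are absorbed (the standard trick: $2\psi\langle X,\nabla\psi\rangle$ is Cauchy–Schwarzed, and the divergence term integrates to zero). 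Concretely $Q_0$ will be the $m$-Bakry–\'Emery scalar curvature of the induced data plus a manifestly nonnegative remainder coming from $|\sff|^2$, the dominant-energy term, and $(\nabla_\ell f)|_\Sigma = 0$ (hypothesis (b)) which kills the cross terms that would otherwise obstruct sign-definiteness. This is exactly where hypotheses (a) and (b) are used: the $m$-Bakry–\'Emery dominant energy condition makes the ambient contribution to $Q_0$ bounded below by $R^{f,m}(g|_\Sigma)$ up to nonnegative terms, and (b) removes the problematic $\nabla_\ell f$ cross term that forced the extra assumptions in \cite{RW}.

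From \eqref{eq:stab} the conclusion follows by a conformal-type argument: define $h = e^{2w}\, g|_\Sigma$ and $\tilde f = f|_\Sigma + (\text{a multiple of }w)$ with the constant chosen so that $R^{\tilde f,m}(h)$ is the image of the weighted conformal Laplacian-type operator associated to $L_f$; choosing $w$ to be (a logarithm of) the principal eigenfunction $\phi$ then converts the inequality \eqref{eq:stab}, via the formula \eqref{eq1.3}, into $R^{\tilde f,m}(h)\geq 2\lambda_1 e^{-2w}\geq 0$, with equality forcing $\lambda_1 = 0$ and all the nonnegative remainder terms to vanish, which back-substitutes to $R^{f,m}(g|_\Sigma)\equiv 0$. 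The dimension restriction $\dim\Sigma\geq 2$ is what makes the conformal factor in the weighted Yamabe-type operator have the right sign. The main obstacle I anticipate is getting the second-variation computation of $Q$ exactly right for general finite $m$ — in particular correctly tracking the $-\frac1m|df|^2$ and $-\frac{m+1}{m}|d\tilde f|^2$ coefficients through the Gauss equation, the Codazzi relation, and the conformal change so that the final curvature is precisely \eqref{eq1.3} and not some other combination; this is the step where \cite{RW} dropped a term, and it requires care rather than cleverness.
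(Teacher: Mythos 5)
Your overall route is the same as the paper's: linearize $\theta_f$ to get a non-self-adjoint stability operator, use outermostness/stability to get a nonnegative principal eigenvalue, pass to an integrated Rayleigh-quotient inequality, and then perform a \emph{simultaneous} conformal change of the metric and affine shift of the weight by powers/logarithms of the principal eigenfunction so that the result is exactly $R^{\tilde f,m}$. You correctly identify the two places where the hypotheses enter and the need to co-rotate $h$ and $\tilde f$ in the conformal step, which is the essential new idea relative to Galloway--Schoen.

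There are, however, two concrete gaps. First, your claim that after the Cauchy--Schwarz absorption ``the divergence term integrates to zero'' against the unweighted measure $dA$ is false in the weighted setting. The extra first-order terms produced by $D\theta_f$ --- in particular the corrected term $-\nabla_\nu\ell\cdot\nabla f$, which under hypothesis (b) evaluates to $-Df\cdot D\log\varphi + K(Df,\nu)$ (the paper's Appendix C), together with the identity $K(Df,\nu)=Df\cdot W_\Sigma$ --- combine only into a \emph{drift} divergence $\divergence_{\Sigma,f}(X)=\divergence_\Sigma X - Df\cdot X$. Hence the integration must be performed against $e^{-f}dV_\Sigma$ and the eigenvalue problem must be posed for the drift Laplacian $\Delta_{\Sigma,f}$; with the unweighted measure the boundary term does not vanish and the Rayleigh inequality you write down is not available. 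This is precisely the term whose omission the paper corrects in \cite{RW}, so you should not expect to be able to sidestep it. Second, your endgame dichotomy (``$\lambda_1>0$ gives positivity; $\lambda_1=0$ forces all remainder terms to vanish'') is too coarse: when $\mu_1=0$ the construction only yields $R^{\tilde f,m}(\tilde h)\ge 0$, possibly positive somewhere and zero elsewhere. To reach the stated conclusion one needs the additional deformation step (the paper's parabolic flow of Lemma \ref{lemmaA.1}) showing that nonnegative, not-identically-zero $m$-Bakry--\'Emery scalar curvature on a closed manifold can be perturbed to strictly positive; only in the case where it is identically zero does one back-substitute to conclude $R^{f,m}(g|_\Sigma)\equiv 0$. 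A minor further point: the stability operator arises from variations along the spatial normal $\nu$ within the initial data set $V$, not ``in the null direction $\ell$'' as you write --- a variation along $\ell$ only reproduces the Raychaudhuri equation and does not yield the elliptic operator you then use.
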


This theorem generalizes a well-known theorem of Galloway and Schoen \cite{GS} to the Bakry-{\'E}mery setting. As the proof shows, Theorem \ref{theorem1.2} remains true
if in place of the outermost property we assume only stability
for the marginally outer $f$-trapped surface $\Sigma$. The papers \cite{Galloway3} and \cite{Galloway4} explore the rigidity cases of the result in \cite{GS}. We were not able to obtain these rigidity results in the Bakry-{\'E}mery setting without assuming unreasonably strong hypotheses on the weight function $f$.

Many remarks are in order.
\begin{remark}\label{remark1.3}
The proof we give for Theorem \ref{theorem1.2} also holds for $m \in (-\infty, -\dim\Sigma] \cup (1-\dim\Sigma,0)$.
\end{remark}

\begin{remark}\label{remark1.4}
$R^{f,m}>0$ is a nonvacuous constraint.
\end{remark}
For $m$ positive at least, there are manifolds that do not admit metrics and weight functions $(g,f)$ that have positive $R^{f,m}$; see Section \ref{section5} for examples. In the $m=0$ case (i.e., if $f=const$), manifolds with identically zero scalar curvature that do not admit positive scalar curvature are Ricci-flat. We were not able to find an analogous result in the $m$-Bakry--\'Emery setting.

\begin{remark}\label{remark1.5}
Though assumption (b) may be viewed as undesirable in general, we will see in Section \ref{section5} that assumption (a) is much weaker than the usual dominant energy condition, so it should not be surprising that an additional assumption is sometimes required.
\end{remark}

\begin{remark}\label{remark1.6}
When $\dim\Sigma=2$ (so the spacetime dimension is $4$) and $f$ is constant, the proof of Theorem \ref{theorem1.2} is known to simplify considerably (using the Gauss-Bonnet theorem), but we were unable to find such a simplification here when $f$ is nonconstant.
\end{remark}

We can apply Theorem \ref{theorem1.2} to obtain analogues of results that are known to hold when $\dim\Sigma = 3$ and $m=0$ (i.e., when $f=const$).
First it will be useful to recall that an \emph{aspherical} manifold is one whose higher homotopy groups $\pi_k(M)$, $k\ge 2$, are all trivial. It will also be useful to recall the following conjectures.

\begin{conjecture}[J Rosenberg \cite{Rosenberg}]\label{conjecture1.7}
A closed manifold $\Sigma$ of dimension $\ge 3$ and not equal to $4$ admits a positive scalar curvature metric if and only if $\Sigma\times{\mathbb S}^1$ does.
\end{conjecture}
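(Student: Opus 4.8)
The plan is to treat the two implications separately, since they are of entirely different difficulty. The forward direction holds for every closed $\Sigma$ and requires no hypothesis on the dimension: if $g_\Sigma$ satisfies $\scal(g_\Sigma) > 0$, then the Riemannian product $g_\Sigma \oplus d\theta^2$ on $\Sigma \times \mathbb{S}^1$ has scalar curvature equal to $\scal(g_\Sigma)$ at every point, because a flat circle contributes nothing to the curvature of a product. Thus the entire content of the conjecture — and the only place the restriction $\dim\Sigma \neq 4$ is needed — lies in the converse: a metric of positive scalar curvature on $\Sigma \times \mathbb{S}^1$ must descend to one on $\Sigma$.

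For the converse I would run the Schoen--Yau/Gromov--Lawson minimal hypersurface descent. Set $N := \Sigma \times \mathbb{S}^1$ and let $\bar g$ be a positive scalar curvature metric on $N$. Using the degree-one projection $N \to \mathbb{S}^1$, produce a smooth two-sided hypersurface $Y \subset N$ lying in the homology class $[\Sigma \times \{\mathrm{pt}\}]$ of a regular fiber and minimizing area in that class; such a minimizer exists and is regular when $3 \le \dim N \le 7$, and in higher dimensions one passes instead to a $\mu$-bubble or invokes the Schoen--Yau/Lohkamp treatment of singular minimizers. Feeding $\scal(\bar g) > 0$ into the stability inequality for $Y$ and conformally rescaling the induced metric produces a metric of positive scalar curvature on $Y$. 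One then has to pass from $Y$, which is only known to be homologous to $\Sigma \times \{\mathrm{pt}\}$, back to $\Sigma$ itself; in low dimensions this is done by iterating the construction and applying the Gromov--Lawson surgery theorem, while the general case is handled --- when it is handled at all --- by the index-theoretic methods of Rosenberg and Rosenberg--Stolz via the assembly map for $\pi_1(\Sigma) \times \mathbb{Z}$ and the product formula for the $C^*$-index. In the dimension relevant to this paper's applications, $\dim\Sigma = 3$ (so $\dim N = 4$ and minimal hypersurfaces are automatically smooth), the statement is known, being reducible to the classification of closed positive scalar curvature $3$-manifolds.

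The main obstacle is precisely what keeps this a conjecture in general: one cannot simply restrict $\bar g$ to a slice $\Sigma \times \{\theta_0\}$, because no totally geodesic slice need exist and a generic slice inherits no curvature control, so one is forced onto a minimal or $\mu$-bubble hypersurface, whose possible singularities once $\dim N \ge 8$ and whose interaction with the $KO$-valued index obstructions are not yet fully understood. Dimension four is singled out because its characteristic positive scalar curvature obstructions of Seiberg--Witten type are not stable under taking a product with $\mathbb{S}^1$, so the converse is genuinely delicate there. Accordingly I would not expect to establish Conjecture \ref{conjecture1.7} in full; the realistic goal is to record the trivial direction, prove the converse in low dimensions via the minimal hypersurface argument, and cite the current status of the remaining cases.
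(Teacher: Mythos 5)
This statement is an open conjecture, not a theorem of the paper: the authors state it only to cite it, note explicitly that Rosenberg found a counterexample in dimension $4$ and that the conjecture remains open in all higher dimensions, and never offer a proof. So there is no proof in the paper to compare against, and your decision not to claim one is the correct call. Your account of the two directions is accurate: the forward implication is indeed trivial via the Riemannian product $g_\Sigma\oplus d\theta^2$, the converse is the open content, the exclusion of dimension $4$ is due to Seiberg--Witten--type obstructions that do not survive taking a product with $\mathbb{S}^1$, and the $\dim\Sigma=3$ case is known (the paper cites \cite[Corollary 7.34]{Chodosh} for exactly this). One point worth adding, since it is how the conjecture actually enters the paper: the authors only ever need the converse in the special case where the positive scalar curvature metric on $\Sigma\times\mathbb{S}^1$ is a \emph{warped product} $g\oplus e^{-2f}d\theta^2$, and in that case the recent work of Rosenberg--Xu \cite{RX} applies (the circle directions automatically lie in the required cones about the normal to the $\Sigma$-slices), which is why Theorem \ref{theorem1.9} and Claim \ref{claim5.2} go through without assuming the full conjecture. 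Your minimal-hypersurface/$\mu$-bubble descent sketch is the standard route to the known partial results, but, as you say, it does not close the general case, and no one should expect it to here.
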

Rosenberg himself found a counter-example in dimension $4$ but the conjecture is still open in all higher dimensions. The theorem below relies only on the conjecture holding in the special case where $\Sigma\times{\mathbb S}^1$ admits a \emph{warped product metric} of positive scalar curvature.

\begin{conjecture}[see \cite{Chodosh} Conjecture 7.28]\label{conjecture1.8}
No closed aspherical manifold admits a positive scalar curvature metric.
\end{conjecture}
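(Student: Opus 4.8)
The plan is to attack this by the two classical mechanisms for obstructing positive scalar curvature and to combine them with the recent low-dimensional breakthroughs; I should say at the outset that this is precisely the Gromov--Lawson conjecture, a genuinely open problem, so what follows is an outline of the available strategies together with what is provably attainable by each.

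First I would pursue the Schoen--Yau minimal hypersurface descent, arguing by induction on $n=\dim\Sigma$. Suppose $\Sigma^n$ is closed aspherical and carries a metric $g$ with $\scal_g>0$. Asphericity forces $\pi=\pi_1(\Sigma)$ to be infinite, hence $H_1(\Sigma;\mathbb{Z})$ --- and by Poincar\'e duality $H_{n-1}(\Sigma;\mathbb{Z})$ --- is nontrivial; pick a nonzero class $\alpha\in H_{n-1}(\Sigma;\mathbb{Z})$ and represent it by an area-minimizing integral hypersurface $\Sigma'$. Stability of $\Sigma'$ together with $\scal_g>0$ produces, via the stability inequality and a conformal change --- this is exactly the Galloway--Schoen/Schoen--Yau mechanism, the $f\equiv\mathrm{const}$ specialization of Theorem~\ref{theorem1.2} --- a metric of positive scalar curvature on $\Sigma'$. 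One then tracks a homotopy-theoretic largeness property (enlargeability, or the nonvanishing of the image of the fundamental class $[\Sigma']$ under a suitable assembly map) down the induction until reaching dimension $2$, where $\chi(\Sigma')\le 0$ (a consequence of asphericity) contradicts Gauss--Bonnet. This is the route that settles the conjecture for $n\le 7$ by the classical regularity theory of minimizers, and for $n=8$ --- and, using the Chodosh--Li and Schoen--Yau generalized-minimizing-hypersurface technology, somewhat beyond --- by controlling the singular set.

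Second, in the spin case I would run the Dirac operator argument: pass to the contractible universal cover $\widetilde\Sigma$ and twist the Dirac operator by the Mishchenko--Fomenko bundle $\mathcal{L}_\pi=\widetilde\Sigma\times_\pi C^*_r(\pi)$; its higher index in $K_*(C^*_r(\pi))$ vanishes when $\scal>0$ by the Schr\"odinger--Lichnerowicz formula, but equals the image under the Baum--Connes assembly map of the $K$-homology fundamental class of $\Sigma$, which is nonzero precisely because $\Sigma$ is a $K(\pi,1)$. Thus, conditionally on the Strong Novikov (or Baum--Connes) conjecture for $\pi$, no such PSC metric exists; unconditionally this already disposes of enlargeable $\Sigma$ --- in particular those admitting a metric of nonpositive sectional curvature --- and of $\Sigma$ whose fundamental group is hyperbolic, acts properly on a $\mathrm{CAT}(0)$ space, has the Haagerup property, and so on.

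The main obstacle --- the reason this remains a conjecture --- is twofold. On the minimal hypersurface side, for $n\ge 9$ area-minimizing hypersurfaces may carry singular sets of codimension up to $7$, and although generalized and slicing arguments have recently been pushed forward, simultaneously controlling those singularities and preserving the largeness that drives the induction is unresolved. On the index-theory side one is hostage to the Strong Novikov conjecture, open for general $\pi$, and the method is in any case confined to spin manifolds. So my honest proposal is: for $\dim\Sigma\le 8$ carry out the descent rigorously, using Theorem~\ref{theorem1.2} with $f\equiv\mathrm{const}$ for the inductive step; in higher dimensions establish the conjecture for every $\Sigma$ whose fundamental group satisfies Baum--Connes (the Dirac route in the spin case, enlargeability otherwise); the fully general statement would require either a new regularity theory for minimizers or a proof of Strong Novikov, neither of which I can supply.
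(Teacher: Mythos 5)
The statement you were asked about is not a theorem of this paper at all: it is Conjecture \ref{conjecture1.8}, quoted from \cite{Chodosh}, and the paper neither proves it nor claims to. It only records that Chodosh and Li \cite{CL} have established it in dimensions $4$ and $5$, and then uses the conjecture \emph{conditionally} (Remark \ref{remark1.10}, item 2) and the proved low-dimensional cases in the proof of Theorem \ref{theorem1.9}. Your instinct to flag the statement as open (it is the Gromov--Lawson aspherical conjecture) is therefore correct, and there is no paper proof to compare against.

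That said, several concrete assertions in your sketch are wrong and should not be presented as attainable. First, infinite fundamental group does not force $H_1(\Sigma;\mathbb{Z})\neq 0$ (closed aspherical manifolds with perfect fundamental group, e.g.\ aspherical homology spheres, exist), so the nonzero class in $H_{n-1}$ you propose to minimize may simply not be there. Second, and more seriously, asphericity is not inherited by a stable minimal hypersurface, so ``tracking a homotopy-theoretic largeness property down the induction'' is precisely the unsolved difficulty rather than a routine step: the Schoen--Yau descent handles enlargeability or SYS-type largeness, but for asphericity itself the conjecture is known only in dimensions $\le 5$ (dimension $3$ classically, dimensions $4$ and $5$ by Chodosh--Li via $\mu$-bubble and filling-type arguments considerably more delicate than the homological descent you describe). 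Your claim that this route ``settles the conjecture for $n\le 7$'' (let alone $8$) is false --- it is open already for $n=6$, even though minimizers are smooth up to dimension $7$. Finally, Theorem \ref{theorem1.2} with $f\equiv\mathrm{const}$ is the Galloway--Schoen statement about marginally outer trapped surfaces in an initial data set, not literally the Riemannian stable-minimal-hypersurface inequality, though the mechanism is analogous; the Dirac/Baum--Connes paragraph is accurate only as a conditional result for spin manifolds whose fundamental group satisfies Strong Novikov.
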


This conjecture has been proved in dimensions $4$ and $5$ by Chodosh and Li \cite{CL}. We can now state the following topological result whose proves relies on \cite{CL}.

\begin{theorem}\label{theorem1.9}
Let $\Sigma$ be a closed orientable prime manifold. If $\dim\Sigma = 3$, then, for $0 < m \leq 2$, there is a Riemannian metric $g$ and function $f$ on $\Sigma$ with $R^{f,m}(g) > 0$ if and only if  $\Sigma$  admits a positive scalar curvature metric. If $\dim\Sigma = 2$, then the same statement holds for $0 < m \leq 3$.
\end{theorem}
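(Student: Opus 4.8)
The ``if'' part is immediate: if $\Sigma$ admits a metric $g$ of positive scalar curvature, then taking $f$ to be constant gives $R^{f,m}(g) = R_g > 0$, in every dimension and for every $m$. The substance is the converse, and the strategy is to turn a metric of positive weighted scalar curvature on $\Sigma$ into a metric of ordinary positive scalar curvature on a product $\Sigma \times T^{m_0}$, and then to feed this into the classification of closed manifolds that admit positive scalar curvature.

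There are two preliminary moves. First, for a fixed pair $(g,f)$ the map $m \mapsto R^{f,m}(g) = R_g + 2\Delta_g f - (1+\tfrac1m)|df|^2_g$ is nondecreasing in $m \in (0,\infty)$, so positivity of $R^{f,m}(g)$ for some $m$ in the allowed interval forces positivity of $R^{f,m_0}(g)$ for the integer endpoint $m_0 = 2$ (if $\dim\Sigma = 3$) or $m_0 = 3$ (if $\dim\Sigma = 2$); it thus suffices to treat $m = m_0 \in \mathbb Z_{>0}$. Second, for integer $m_0$ I would invoke the warped-product interpretation of $R^{f,m_0}$: on $\bar M := \Sigma \times T^{m_0}$ with the metric $\bar g := g \oplus e^{-2f/m_0} g_{T^{m_0}}$, $g_{T^{m_0}}$ flat, the standard warped-product scalar-curvature formula with warping function $w = e^{-f/m_0}$ simplifies (using $\Delta_\Sigma w / w = \tfrac1{m_0^2}|df|^2 - \tfrac1{m_0}\Delta_\Sigma f$ and $|\nabla w|^2/w^2 = \tfrac1{m_0^2}|df|^2$) to
\begin{equation*}
\bar R = R_g + 2\Delta_\Sigma f - \frac{m_0+1}{m_0}|df|^2_g + \frac{R_{g_{T^{m_0}}}}{w^2} = R^{f,m_0}(g) > 0 .
\end{equation*}
Hence $\bar M$, a closed manifold of dimension $\dim\Sigma + m_0 = 5$, carries a metric of positive scalar curvature.

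Now I would descend from $\bar M$ to $\Sigma$, arguing by contradiction under the assumption that $\Sigma$ has no positive scalar curvature metric. If $\dim\Sigma = 2$: closed, orientable and prime forces $\Sigma \in \{S^2, T^2\}$, and since $S^2$ does admit positive scalar curvature we have $\Sigma = T^2$, so $\bar M = T^5$; but $T^5$ admits no metric of positive scalar curvature (it is aspherical, so by \cite{CL}; one even sees this more cheaply by integrating: $\int_{T^2} R_g\,dA = 4\pi\chi(T^2) = 0$ and $\int_{T^2}\Delta_g f\,dA = 0$ give $\int_{T^2} R^{f,m}(g)\,dA = -\tfrac{m+1}{m}\int_{T^2}|df|^2_g\,dA \le 0$, precluding $R^{f,m}(g) > 0$ for all $m > 0$), a contradiction. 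If $\dim\Sigma = 3$: since $\Sigma$ is closed, orientable, prime and not of positive scalar curvature, the classification of $3$-manifolds of positive scalar curvature (Schoen--Yau, Gromov--Lawson, and Perelman's geometrization --- a prime orientable $3$-manifold of positive scalar curvature is $S^2 \times S^1$ or a spherical space form) implies $\Sigma$ is aspherical, hence $\bar M = \Sigma \times T^2$ is a closed aspherical $5$-manifold, which by Chodosh--Li \cite{CL} (Conjecture~\ref{conjecture1.8} in dimension $5$) admits no metric of positive scalar curvature --- contradicting the previous paragraph. In either case $\Sigma$ admits positive scalar curvature, which is the assertion.

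The step I expect to carry all the weight --- and the origin of the bounds $m \le 2$ and $m \le 3$ --- is this final descent: we need ``a closed aspherical manifold of the shape $\Sigma \times T^{m_0}$ carries no positive scalar curvature,'' and the aspherical case of Conjecture~\ref{conjecture1.8} is presently a theorem only through dimension $5$, so we must keep $\dim\Sigma + m_0 \le 5$; the monotonicity in $m$ is then exactly what upgrades ``true at the integer $m_0$'' to ``true on the whole interval $(0,m_0]$.'' With more known about Conjecture~\ref{conjecture1.8}, or even just about Conjecture~\ref{conjecture1.7} restricted to warped-product metrics, the allowable ranges of $m$ would widen accordingly. Primeness of $\Sigma$ enters only in passing from ``no positive scalar curvature'' to ``aspherical'' (and in identifying $\Sigma$ among surfaces); for a reducible $\Sigma$ with an aspherical prime summand, $\Sigma$ itself is not aspherical and the torus-product obstruction no longer applies directly, so I would not expect the statement to survive verbatim without it.
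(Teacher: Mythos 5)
Your proposal is correct and follows essentially the same route as the paper: round $m$ up to the integer endpoint using monotonicity of $R^{f,m}$ in $m$, realize $R^{f,m_0}$ as the scalar curvature of a warped product over $\Sigma$ with a flat torus fibre, use primeness to conclude that a non-PSC $\Sigma$ is aspherical, and invoke Chodosh--Li in dimension $\le 5$. Your Gauss--Bonnet integration aside for the $2$-dimensional case is a nice elementary shortcut (valid for all $m>0$ there), but it does not change the substance of the argument.
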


\begin{remark}\label{remark1.10}\:
\begin{itemize}
\item[1.] Using the results of \cite{RX} on the ${\mathbb S}^1$-stability conjecture, the assumption that $\Sigma$ is prime in Theorem \ref{theorem1.9} can be dropped if the restriction $0 < m \leq 1$ is imposed.
\item[2.] If Conjecture \ref{conjecture1.8} holds, then Theorem \ref{theorem1.9} holds for all $m \in (0,\infty)$.
\item[3.] The orientability assumption on $\Sigma$ in Theorem \ref{theorem1.9} can be removed provided $\Sigma$ contains no two-sided $\mathbb{RP}^2$.
\end{itemize}
\end{remark}

To prove the above assertions, we exploit two facts. The first is that for $k$ a positive integer, the $k$-Bakry--\'Emery scalar curvature $R^{f,k}$ can be related to ordinary (i.e., $k=0$) scalar curvature of higher dimensional manifolds endowed with warped product metrics. The second is that when $m$ is not an integer, we have that $R^{f,k}=R^{f,m}+\left ( \frac{1}{m}-\frac{1}{k}\right ) |df|^2\ge R^{f,m}$ for any integer $k>m>0$. We are also able to give an interpretation of  the \emph{$m$-Bakry--\'Emery dominant energy condition} in terms of warped product spacetimes (in physics terms, a simple class of Kaluza-Klein models). Our results apply in particular to the important special case of compact cross-sections of a static Killing horizon. We can compare this approach to the approach of \cite{CGS}, which applies to product metrics without warping and uses the method of topological censorship to obtain constraints on the topology of black hole event horizons (whose compact cross-sections are marginally outer trapped surfaces, if the spacetime is stationary).

This paper is organized as follows. Section \ref{section2} contains basic background describing modifications to the geometry of null geodesics and marginally trapped surfaces in weighted spacetimes. Section \ref{section3.1} contains the aforementioned novel proof of the weighted version of Galloway's null splitting theorem \cite{Galloway1}. Section \ref{section3.2} contains the proof of Theorem \ref{theorem1.1}, as well as a one-sided version of the theorem. In Section \ref{section4.1}, we turn attention to the stability operator for marginally $f$-trapped surfaces and formulate a weighted version of the dominant energy condition. Under the assumption that the latter holds, in Section \ref{section4.2} we make a further assumption appropriate to Killing horizons and are then able to prove Theorem \ref{theorem1.2}. Finally, in Section \ref{section5}, we show how our results apply to warped product spacetimes and discuss the significance and nontriviality of nonnegative and positive Bakry--\'Emery scalar curvature $R^{f,m}$. The proof of Theorem \ref{theorem1.9} is given in Section \ref{section5}. Appendix \ref{appendixA} contains a geometric flow argument showing that nonnegative Bakry-\'Emery scalar curvature that does not vanish everywhere on a closed manifold can be transformed to positive Bakry-\'Emery scalar curvature. Appendix \ref{appendixB} contains formulas that show how Bakry-\'Emery Ricci curvature behaves under conformal transformations coupled to scale transformations of the weight function $f$. Appendix \ref{appendixC} contains a computation used in Section \ref{section4}.

\subsection*{Acknowledgments}
The research of EL was supported by Carlsberg Foundation CF21-0680 and Danmarks Grundforskningsfond CPH-GEOTOP-DNRF151. The research of EW was supported by NSERC grant RGPIN--2022--03440. All three authors are grateful to the University of Vienna and the Emerging Fields Project ``A New Geometry for Einstein's Theory of Relativity and Beyond'' for hospitality in July 2025.

This research was funded in part by the Austrian Science Fund (FWF) [Grants DOI \href{https://doi.org/10.55776/EFP6}{10.55776/EFP6} and \href{https://doi.org/10.55776/J4913}{10.55776/J4913}]. For open access purposes, the authors have applied a CC BY public copyright licence to any author accepted manuscript version arising from this submission.

\section{Weighted null mean curvature}\label{section2}
\setcounter{equation}{0}

\subsection{Definition of weighted null mean curvature}
We begin by recalling the null second fundamental form. Let $S\hookrightarrow M$ be a null hypersurface and let $p\in S$. One can define a basis $\left \{ k,\ell, e_i; i=1,\dots, n-2 \right \}$ for $T_pM$ such that $g(k,k)=g(\ell,\ell)=0$ so $k$ and $\ell$ are future-pointing null; $\ell$ is tangent to $S$; the vectors $e_i$ are tangent to $S$, spacelike, normalized to unit length, and orthogonal to each other and to $k$ and $\ell$; and $g(\ell,k)=-1$. Then $\left \{ \ell,e_i; i=1,\dots,n-2\right \}$ is a basis for $T_pS\subset T_pM$. For $X\in T_pS$ let $\overline{X}\in T_pS/\ell$ denote the equivalence class of all $X'\in T_pS$ such that $X'\sim X$ iff $X'-X\in \Span \{ \ell \}$.

\begin{definition}\label{definition2.1}
For any $X\in T_pS$, the \emph{tangential null Weingarten map} $B_{\ell}:T_pS/\ell\to T_pS/\ell$ is given by
\begin{equation}\label{eq2.1}
B_{\ell}(X):=\overline{\nabla_X \ell}.
\end{equation}
\end{definition}
There is also a \emph{transverse null Weingarten map} where $\ell$ is replaced by $k$. Null Weingarten maps are discussed, for example, by Galloway \cite[Section II.1]{Galloway1}.

The trace of a tangential null Weingarten map yields the corresponding null tangential mean curvature.
\begin{definition}{\cite[Equation II.1]{Galloway1}}\label{definition2.2}
The \emph{tangential null mean curvature}, or \emph{expansion scalar}, of $S$ is
\begin{equation}
\label{eq2.2}
\theta:=\sum\limits_{i=1}^{n-2} h \left ( B_{\ell}({\overline e}_i),{\overline e}_i\right ) = \sum\limits_{i=1}^{n-2}  g(\nabla_{e_i}\ell,e_i).
\end{equation}
where $h$ is the positive-definite inner product on $T_pS/\ell$ induced from the inner product $g(\cdot,\cdot)$ on $T_pM$. The \emph{Bakry--\'Emery tangential null mean curvature}, or \emph{Bakry--\'Emery expansion scalar}, of $S$ is
\begin{equation}\label{eq2.3}
\theta_f:=\theta-\nabla_\ell f.
\end{equation}
\end{definition}
Later in the sequel the \emph{transverse null mean curvature} will be denoted by $\kappa$ and the corresponding \emph{Bakry--\'Emery transverse null mean curvature} will be denoted by $\kappa_f$.

\subsection{The Raychaudhuri equation}
The definition of $\theta_f$ is motivated by the following calculation. Consider a smooth spacelike hypersurface $P$ intersecting the null hypersurface $S$ and let $\Sigma=S\cap P$. Then $\Sigma$ has null second fundamental form $B$ in $S$, governed by the familiar Raychaudhuri equation. Let $\tf B:=B-\frac{1}{(n-2)}\theta h$ denote the tracefree part of $B$ and let $\theta:=\tr B$ be the null mean curvature. We may choose the basis null vector field $\ell$ such that it is tangent to an affinely parametrized null geodesic congruence generating $S$ (and parametrized by $t$). Then the Raychaudhuri equation yields
\begin{equation}\label{eq2.4}
\frac{d\theta}{dt}= -\ric(\ell,\ell)-|B|^2.
\end{equation}
It can be re-expressed in terms of $\theta_f$ and the Bakry--\'Emery Ricci curvature
\begin{equation}\label{eq2.5}
\ric_f^m:=\ric+\hess f - \frac{1}{m}df\otimes df.
\end{equation}
Using that the null generators $\gamma$ of $S$ are geodesics with tangent vector field $\ell =\frac{d}{dt}$ for $t$ an affine parameter along each generator, then
\begin{equation}\label{eq2.6}
\begin{split}
\frac{d\theta_f}{dt}=&\, -\ric(\ell,\ell)-\hess f (\ell,\ell) -|B|^2\\
=&\, -\ric_f^m (\ell,\ell) -\frac{1}{m}\left ((f\circ\gamma)'\right )^2 -|\tf B|^2-\frac{1}{n-2}\theta^2\\
=&\, -\ric_f^m (\ell,\ell) -\left ( \frac{1}{m}-\frac{1}{n-2}\right )\left ((f\circ\gamma)'\right )^2 -|\tf B|^2-\frac{\theta_f^2}{n-2}\\
&\, -\frac{2\theta(f\circ\gamma)'}{n-2}\\
=&\, -\ric_f^m (\ell,\ell) -\left ( \frac{1}{m}+\frac{1}{n-2}\right )\left ((f\circ\gamma)'\right )^2 -|\tf B|^2-\frac{\theta_f^2}{n-2}\\
&\, -\frac{2\theta_f(f\circ\gamma)'}{n-2}.
\end{split}
\end{equation}

If $m>0$ (and $n > 2$, of course), we use the identity
\begin{equation}
\label{eq2.7}
\frac{\theta^2}{n-2}+\frac{\left ( (f\circ \gamma)'\right )^2}{m}\ge \frac{\left ( \theta- (f\circ \gamma)'\right )^2}{n+m-2} \equiv \frac{\theta_f^2}{n+m-2}.
\end{equation}
Then \eqref{eq2.6} quickly simplifies to yield
\begin{equation}
\label{eq2.8}
\begin{split}
\frac{d\theta_f}{dt} =&\, -\ric_f^m (\ell,\ell) -\frac{1}{m}\left ((f\circ\gamma)'\right )^2 -|\tf B|^2-\frac{1}{n-2}\theta^2\\
\le &\, -\frac{\theta_f^2}{n+m-2}.
\end{split}
\end{equation}

If we cannot assume that $m>0$, \eqref{eq2.6} implies that
\begin{equation}
\label{eq2.9}
\begin{split}
\frac{d}{dt}\left ( e^{\frac{2f\circ\gamma}{n-2}}\theta_f\right ) =&\,  -e^{\frac{2f\circ\gamma}{n-2}}\left [ \ric_f^m (\ell,\ell) +\frac{(n+m-2)}{m(n-2)}\left ((f\circ\gamma)'\right )^2\right .\\
&\, \left .+|\tf B|^2+\frac{\theta_f^2}{n-2}\right ].
\end{split}
\end{equation}
Define $x_f:=e^{\frac{2f\circ\gamma}{n-2}}\theta_f$. Then we obtain
\begin{equation}
\label{eq2.10}
\frac{dx_f}{dt} \le -e^{\frac{2f\circ\gamma}{n-2}} \ric_f^m (\ell,\ell) -e^{\frac{-2f\circ\gamma}{n-2}}\frac{x_f^2}{n-2}
\end{equation}
whenever either $m>0$ or $m\le 2-n$, with equality if and only if $\tf B=0$ and either $(f\circ\gamma)'=0$ or $m=2-n$. When $\ric_f^m(\ell,\ell)\ge 0$, then
\begin{equation}
\label{eq2.11}
\frac{dx_f}{dt} \le -\frac{x_f^2e^{\frac{-2f\circ\gamma}{n-2}}}{n-2}.
\end{equation}
Now define a new parameter $\tau$ along the null geodesic generators $\gamma(t)$ of $S$ by
\begin{equation}\label{eq2.12}
\tau:=\int\limits_0^t e^{-2f\circ \gamma (t')/(n-2)}dt'.
\end{equation}
Then \eqref{eq2.11} becomes
\begin{equation}\label{eq2.13}
\frac{dx_f}{d\tau} \le -\frac{x_f^2}{n-2}, \quad x_f:=e^{\frac{2f\circ\gamma}{n-2}}\theta_f,
\end{equation}
whenever $m\le 2-n$ (or $m\in (0,\infty]$).

When $f\le C$, inequality \eqref{eq2.13} can be used to control $\theta_f$ for $m\in (-\infty,n-2]$ or $m=\infty$ (i.e., when coefficients of $\frac{1}{m}$ are replaced by $0$). Inequality \eqref{eq2.13} also applies when $m\in (0,\infty)$, but then \eqref{eq2.8} gives better control and does so without needing boundedness of $f$.

\begin{definition}\label{definition2.3}
We say that an inextendible geodesic is \emph{future $f$-complete} if there is a $\tau_0\in {\mathbb R}$ (or possibly $\tau_0=-\infty$) such that the geodesic is defined for all $\tau\ge \tau_0$, where $\tau$ is defined by \eqref{eq2.12}. If the domain of $\tau$ is bounded above, we will say that the inextendible geodesic is \emph{future $f$-incomplete}. Past $f$-complete and past $f$-incomplete are defined dually. A geodesic that is both future and past $f$-complete is called \emph{$f$-complete}. An $f$-complete achronal null curve is a \emph{complete null $f$-line}.
\end{definition}

\begin{remark}\label{remark2.4}
If $f$ is bounded above on $M$ then geodesic completeness implies $f$-completeness and $f$-incompleteness implies geodesic incompleteness.
\end{remark}

\subsection{Ellipticity of the weighted null mean curvature}\label{subsection: ellipticity of theta_f}
A brief calculation of Galloway \cite{Galloway1} shows that $\theta$ is elliptic in the following sense. Following \cite[pp 547--548]{Galloway1}, we now take $P$ to be a timelike hypersurface intersecting the null hypersurface $S$, and again let $\Sigma$ be the intersection surface $\Sigma:=P\cap S$, and let $p \in \Sigma$. We also choose a smooth spacelike hypersurface $V\hookrightarrow P$ (i.e., $V$ is a codimension-two spacelike surface in spacetime) intersecting $\Sigma$ such that $p\in V\cap \Sigma$. Now $\Sigma$ can be expressed, locally near $p$ at least, as a graph over $V$. In Gaussian coordinates $(t,x^1,\dots,x^{n-2})$ for $P$ based at $V$ (so that $V$ is the $t=0$ locus and $\frac{\partial}{\partial t}$ is future timelike), we write
\begin{equation}\label{eq2.14}
\Sigma=\graph u =\left \{ (u(x),x)\in P; x:=(x^1,\dots,x^{n-2})\in V\right \}.
\end{equation}
Using this construction, Galloway is able to show first that the mean curvature of $\Sigma$ in $P$ is elliptic and then that the null mean curvature $\theta$ of $\Sigma$ in $S$ is elliptic on functions $u\in C^{\infty}(V)$. Specifically,
\begin{equation}\label{eq2.15}
\theta=\theta(u) = \sum\limits_{i,j=1}^{n-2}a^{ij}(x,u,\partial u)\partial_{ij}u +b(x,u,\partial u),
\end{equation}
where the matrix $\left [ a^{ij}\right ]$ is positive-definite.

It remains to verify that $\nabla_{\ell}f$ does not depend on derivatives of $u$ beyond first derivatives, which amounts to verifying that $\ell$ does not depend on derivatives of $u$ beyond first derivatives. Since $\Sigma$ is a codimension-two surface in spacetime, there are exactly two future-null directions spanned by vectors orthogonal to $\Sigma$. Extend the coordinate system above to a neighbourhood of $V$ (near $p$) by, say, a Gaussian normal coordinate $x^{n-1}$ based at $P$ (so $P$ is the locus $x^{n-1}=0$). At any $q\in \Sigma$, a spacetime vector $v$ is orthogonal to $\Sigma$ if and only if
\begin{equation}\label{eq2.16}
\left \langle v,dS\right \rangle=0,\quad dS=dt-\sum\limits_{i=1}^{n-2}\frac{\partial u}{\partial x^i}dx^i.
\end{equation}

This condition depends only on $u$ and its first derivatives. Then the condition that $v$ must be null, depending only on the spacetime metric, cannot depend on higher derivatives of $u$ either. These two conditions determine $\ell$ up to a discrete choice (which cannot depend continuously on u or its derivatives) and an overall scale $v=\lambda \ell$. But since integral curves of $v$ meet $\Sigma$ only once, say at $q$, the value of this scale at any point of $\Sigma$ is simply initial data for the scale factor along that integral curve. The initial value can depend on the coordinates $(t(q)=u(x(q)),x(q))$, but not on derivatives of $u$. Hence $\ell=\ell(u,\partial u)$ and so $\theta_f:=\theta-\nabla_{\ell}f$ is an elliptic operator on $C^{\infty}(V)$.

\section{Penrose singularity theorems and null rigidity in weighted spacetimes}\label{section3}
\setcounter{equation}{0}

\noindent In this section, we first give a new proof of Galloway's null splitting theorem \cite{Galloway1} in the Bakry--\'Emery setting (established in\cite{WW}) using the ellipticity of $\theta_f$ and more closely following Galloway's original proof rather than using conformal arguments. Then we turn to the Penrose singularity theorem for arbitrary $m$, i.e., the proof of Theorem \ref{theorem1.1}.

\subsection{The null splitting theorem}\label{section3.1}

The following Bakry--\'Emery version of Galloway's null splitting theorem \cite[Theorem IV.1]{Galloway1} has been established by Woolgar--Wylie \cite{WW} by applying a conformal argument, which reduces the claim to the non-weighted version of the result by Galloway.

\begin{theorem}\label{theorem3.1}
Let $(M,g,f)$ be a smooth spacetime satisfying the Bakry--\'Emery null energy condition for some $m \in  (-\infty,2-n] \cup (0,\infty] $.
\begin{itemize}
\item [i)] When $m\in (0,\infty)$, suppose that $(M,g,f)$ is null geodesically complete and suppose that there exists a null line $\eta$ in $M$. Then $\eta$ is contained in a smooth closed achronal totally geodesic null hypersurface $S \subset M$, and $f$ is constant along each null geodesic generator of $S$.
\item [ii)] When $m\in (-\infty,2-n)\cup \{ \infty\}$ assume that that $(M,g,f)$ is null geodesically $f$-complete and suppose that there exists a complete null $f$-line $\eta$ in $M$. Then $\eta$ is contained in a smooth closed achronal totally geodesic null hypersurface $S \subset M$, and $f$ is constant along each null geodesic generator of $S$.
\item [iii)] When $m=2-n$, assume that that $(M,g,f)$ is null geodesically $f$-complete and suppose that there exists a complete null $f$-line $\eta$ in $M$ (i.e., an achronal null geodesic that is complete with respect to $\tau$ as defined in \eqref{eq2.12}. Then $\eta$ is contained in a smooth closed achronal umbilic null hypersurface $S \subset M$.
\end{itemize}
\end{theorem}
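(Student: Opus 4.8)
The plan is to run Galloway's proof of the null splitting theorem \cite{Galloway1} directly, replacing the null mean curvature $\theta$ throughout by the Bakry--\'Emery null mean curvature $\theta_f$. Two ingredients from Section~\ref{section2} make this possible: the weighted Raychaudhuri inequalities \eqref{eq2.8} (for $m\in(0,\infty)$) and \eqref{eq2.13} (for $m\le 2-n$ or $m=\infty$), which supply the focusing behaviour that drives Galloway's argument; and the ellipticity of the operator $u\mapsto\theta_f(u)$ from Section~\ref{subsection: ellipticity of theta_f}, whose principal part \eqref{eq2.15} has the same positive-definite symbol $[a^{ij}]$ as $\theta$, so that Galloway's null maximum principle remains available for $\theta_f$. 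Concretely, set $N_1:=\partial I^+(\eta)$ and $N_2:=\partial I^-(\eta)$. Achronality of $\eta$ gives $\eta\subseteq N_1\cap N_2$ and $I^+(\eta)\cap I^-(\eta)=\emptyset$, so near each point of $\eta$ the hypersurface $N_1$ lies weakly to the \emph{future} of $N_2$ and is tangent to it along $\eta$; and, exactly as in \cite{Galloway1}, since $\eta$ is a null line it is the unique generator of $N_1$, resp.\ $N_2$, through each of its interior points, so $N_1$ and $N_2$ are smooth null hypersurfaces near $\eta$.

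The first substantive step is to pin down the signs of the weighted expansions along $\eta$: $\theta_f|_{N_1}\ge 0$ and $\theta_f|_{N_2}\le 0$. For case~(i) this uses \eqref{eq2.8}: if $\theta_f<0$ at a point of $N_1$ on the generator $\eta$, the Riccati inequality $d\theta_f/dt\le -\theta_f^2/(n+m-2)$ forces $\theta_f\to-\infty$ at a finite future affine parameter $t_\ast$; since $\eta$ is future null geodesically complete and $f\in C^2$, the quantity $\nabla_\ell f=(f\circ\eta)'$ stays finite at $t_\ast$, so $\theta=\theta_f+(f\circ\eta)'\to-\infty$ there --- a focal point in the interior of the generator $\eta$ of the achronal boundary $N_1$, which is impossible. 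The bound on $N_2$ is the time-dual statement (a focal point to the past: $\theta\to+\infty$ at a finite past parameter). For cases~(ii) and~(iii) one argues identically, but with the parameter $\tau$ of \eqref{eq2.12} and the inequality \eqref{eq2.13}; now $f$-completeness places the blow-up value of $\tau$ inside the domain, and the translation ``$\theta_f$-blow-up $\Rightarrow$ genuine focal point'' is cleanest through the conformally rescaled metric $\tilde g:=e^{-2f/(n-2)}g$, for which $\tau$ is an affine parameter, $\tilde g$-completeness is $f$-completeness, $\ric_{\tilde g}(\ell,\ell)=\ric^{f,2-n}_g(\ell,\ell)\ge\ric^{f,m}_g(\ell,\ell)\ge 0$ (Appendix~\ref{appendixB}), $\eta$ is a $\tilde g$-null line, $N_1,N_2$ are achronal boundaries for the unchanged causal structure, and $x_f=e^{2f/(n-2)}\theta_f$ is precisely the $\tilde g$-expansion in the parameter $\tau$; the classical focal-point argument for $(M,\tilde g)$ then yields the signs.

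With these signs, Galloway's null maximum principle --- the strong maximum principle for the quasilinear elliptic operator $u\mapsto\theta_f(u)=\sum a^{ij}\partial_{ij}u+b_f$ applied to the graph functions of $N_1$ and $N_2$ --- gives $N_1=N_2$ on a neighbourhood of each point of $\eta$, together with $\theta_f\equiv 0$ there; so $\theta_f\equiv 0$ on a neighbourhood of $\eta$ in $S:=N_1=\partial I^+(\eta)$, a closed achronal null hypersurface. To extend $\theta_f\equiv 0$ to all of $S$: every generator $\sigma$ of $S$ is future-inextendible (a generator of $\partial I^+$) and past-inextendible (a past endpoint would lie at an interior point of the generator $\eta$, forcing $\sigma=\eta$, which has none), so $\sigma$ is itself a complete, resp.\ $f$-complete, achronal null geodesic --- a null line, resp.\ $f$-line; applying the previous step to $\sigma$, with $\partial I^-(\sigma)$ (which lies weakly to the past of $S$, again by achronality of $S$) in the role of $N_2$, the maximum principle forces $\theta_f\equiv 0$ along $\sigma$, hence on all of $S$. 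Finally, $\theta_f\equiv 0$ along generators makes the weighted Raychaudhuri computation an equality: for case~(i), inspecting \eqref{eq2.6}--\eqref{eq2.8} forces $\tf B=0$, $\theta=0$ and $(f\circ\gamma)'=0$ (and $\ric^m_f(\ell,\ell)=0$), so $B=0$ --- $S$ is totally geodesic --- and $f$ is constant along each generator; for case~(ii) ($m<2-n$ or $m=\infty$), \eqref{eq2.9} forces $\tf B=0$ and, since its coefficient $\tfrac{n+m-2}{m(n-2)}=\tfrac1m+\tfrac1{n-2}$ of $((f\circ\gamma)')^2$ is then strictly positive, also $(f\circ\gamma)'=0$, again giving $B=0$ and $f$ constant along generators; for case~(iii) ($m=2-n$) that coefficient vanishes, so the equality case of \eqref{eq2.10} forces only $\tf B=0$, i.e.\ $B$ is pure trace and $S$ is merely umbilic. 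Global smoothness of $S$ follows as in \cite{Galloway1} from uniqueness of generators and the absence of focal points guaranteed by $\theta_f\equiv 0$.

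I expect the sign step --- $\theta_f|_{N_1}\ge 0$ and $\theta_f|_{N_2}\le 0$ --- to be the main obstacle, since the focusing machinery (generators, focal points, the fact that an achronal boundary carries no focal point in the interior of a generator) lives in the unweighted causal geometry of $g$, whereas the weighted Raychaudhuri inequalities control only $\theta_f$. In case~(i) the gap is closed by the $C^2$-regularity of $f$ along the affinely complete geodesic $\eta$, which keeps $\nabla_\ell f$ bounded exactly where $\theta_f$ blows up; in cases~(ii) and~(iii) it is closed by the conformal change $\tilde g=e^{-2f/(n-2)}g$, for which $f$-completeness is honest completeness and the weighted null energy condition becomes the genuine one --- but one must then transport the conclusion back through the conformal factor, which is precisely why case~(iii) yields only an umbilic $S$, and why the direct computation with \eqref{eq2.6} (rather than a black-box appeal to the unweighted splitting theorem for $(M,\tilde g)$) is needed to obtain the full statements~(i) and~(ii).
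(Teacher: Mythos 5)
Your proposal follows essentially the same route as the paper: establish sign bounds on $\theta_f$ for the achronal boundaries $\partial I^{\pm}(\eta)$ via the weighted Raychaudhuri inequalities \eqref{eq2.8} and \eqref{eq2.13}, invoke the maximum principle for the quasilinear elliptic operator $u\mapsto\theta_f(u)$ to conclude the boundaries coincide along $\eta$ with $\theta_f\equiv 0$, propagate to every generator of $S$ (each of which is itself a null line, resp.\ $f$-line), and then read off the equality case of \eqref{eq2.6}--\eqref{eq2.10} to obtain $\tf B=0$, $\theta=0$ and $(f\circ\gamma)'=0$ for $m\neq 2-n$, and only $\tf B=0$ for $m=2-n$. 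This is precisely the paper's Lemma \ref{lemma: BE null maximum principle}, Lemma \ref{lemma3.4}, and the concluding Raychaudhuri analysis. Two points deserve flagging. First, your assertion that $N_1=\partial I^+(\eta)$ and $N_2=\partial I^-(\eta)$ are \emph{smooth} null hypersurfaces near $\eta$ because $\eta$ is the unique generator through its interior points is not justified: achronal boundaries are a priori only $C^0$, and their smoothness near $\eta$ is a \emph{conclusion} of the maximum-principle step, not an input. This is exactly why the paper proves the sign bound in the sense of support hypersurfaces (Lemma \ref{lemma3.4}, using $U\cap\partial J^-(\eta(r))$ and letting $r$, resp.\ the corresponding $\tau$, tend to its supremum) and uses the $C^0$ version of the maximum principle (Remark \ref{remark3.3}); your direct ``no focal point in the interior of a generator'' argument presupposes that $\theta_f$ is classically defined on the boundary. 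Second, for cases (ii) and (iii) you route the conjugate-point interpretation of the blow-up of $x_f=e^{2f/(n-2)}\theta_f$ through the conformal metric $\tilde g=e^{-2f/(n-2)}g$; this is a valid identification (and your curvature comparison $\ric_{\tilde g}(\ell,\ell)\ge\ric^{f,m}_g(\ell,\ell)$ is correct), but the paper deliberately avoids the conformal technique of \cite{WW} and instead works directly with the reparametrization $\tau$ of \eqref{eq2.12}, interpreting a divergence of $x_f$ as a conjugate pair for the reparametrized geodesic $\eta(\tau)$. Neither point changes the substance, but the first one is the technical heart of Galloway's method and should not be asserted as immediate.
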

By \emph{umbilic} we will mean that the null Weingarten map is the identity times a function $F:S\to {\mathbb R}$.

We will now give a new proof of Theorem \ref{theorem3.1} by making use of the ellipticity of $\theta_f$ observed before. To this end, we require the following Bakry--\'Emery version of Galloway's maximum principle for null hypersurfaces \cite[Theorem II.1]{Galloway1}.

\begin{lemma} \label{lemma: BE null maximum principle} Let $S_1$ and $S_2$ be two null hypersurfaces in a smooth spacetime $(M,g)$. Suppose $S_1$ and $S_2$ meet at $p$, and that $S_2$ lies to the future side of $S_1$ near $p$. Moreover, suppose that their respective Bakry--\'Emery tangential null mean curvatures $\theta_f^1$ and $\theta_f^2$ satisfy
\begin{equation}
\label{eq3.1}
\theta^1_f \leq 0 \leq \theta^2_f.
\end{equation}
Then $S_1$ and $S_2$ coincide near $p$, and this common null hypersurface has vanishing Bakry--\'Emery tangential null mean curvature $\theta_f=\theta_f^1 = \theta_f^2 = 0$.
\end{lemma}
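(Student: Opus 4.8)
The plan is to run Galloway's proof of the unweighted null maximum principle \cite[Theorem II.1]{Galloway1} essentially verbatim, systematically replacing the null mean curvature $\theta$ by the Bakry--\'Emery null mean curvature $\theta_f$. The only thing that needs to be checked is that this substitution is legitimate, and this is exactly the content of Section \ref{subsection: ellipticity of theta_f}: when a null hypersurface near $p$ is written as a graph $\graph u$ of a function $u \in C^\infty(V)$ over a codimension-two spacelike surface $V$ inside a timelike hypersurface $P$ through $p$, the term $\nabla_\ell f$ depends only on $(x,u,\partial u)$, so that $u \mapsto \theta_f(u) = a^{ij}(x,u,\partial u)\,\partial_{ij}u + b(x,u,\partial u)$ is a second-order quasilinear operator on $C^\infty(V)$ with the \emph{same} positive-definite principal part $[a^{ij}]$ as $u \mapsto \theta(u)$. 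Thus all the analytic input Galloway uses (quasilinear ellipticity, the mean-value reduction to a linear elliptic operator, Hopf's lemma) is available for $\theta_f$.

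Concretely I would argue as follows. First, localize: since $S_2$ lies to the future side of $S_1$ near $p$, the two null hypersurfaces are tangent at $p$, so one may choose a timelike hypersurface $P$ through $p$ transverse to their common tangent null hyperplane, and a codimension-two spacelike surface $V \subset P$ through $p$, such that, after shrinking around $p$, $\Sigma_i := S_i \cap P = \graph u_i$ for functions $u_i \in C^\infty(V)$; as in \cite{Galloway1} the null generators of $S_1$ and $S_2$ can be taken in a common future-null direction transverse to $P$, so that $\theta_f^i$, pulled back to $V$, equals $\theta_f(u_i)$ for one and the same operator $\theta_f(\cdot)$. Let $x_0 \in V$ correspond to $p$, so $u_1(x_0) = u_2(x_0)$. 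Second, following the sign conventions of \cite{Galloway1}, the hypothesis that $S_2$ lies to the future side of $S_1$ together with $\theta_f^1 \le 0 \le \theta_f^2$ is arranged so that the nonnegative difference $w$ of $u_1$ and $u_2$ (which vanishes at the interior point $x_0$) satisfies a linear elliptic inequality $\mathcal L w \le 0$, obtained by writing the difference of the quasilinear operator $\theta_f(\cdot)$ along the segment joining $u_1$ to $u_2$; here $\mathcal L = \bar a^{ij}\partial_{ij} + \bar b^i \partial_i + \bar c$ has positive-definite principal part $[\bar a^{ij}]$ (an $s$-average of the $[a^{ij}]$) and bounded coefficients near $x_0$. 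Third, Hopf's strong maximum principle applies: a nonnegative supersolution vanishing at an interior point is identically zero near that point (after shrinking $V$ so that $\bar c \le 0$, or by the standard absorption of the zeroth-order term). Hence $u_1 \equiv u_2$, so $S_1$ and $S_2$ coincide near $p$; and on this common hypersurface $\theta_f^1 = \theta_f^2$, whence $\theta_f^1 \le 0 \le \theta_f^2$ forces $\theta_f = \theta_f^1 = \theta_f^2 = 0$.

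The only genuinely delicate point is the orientation bookkeeping in the second step: one must carefully match the meaning of ``$S_2$ lies to the future side of $S_1$ near $p$'' with the chosen sign convention for $\theta_f$, and hence with the direction of the inequality between $u_1$ and $u_2$, so that the difference function comes out as a \emph{super}solution with an interior \emph{minimum} (the case Hopf controls) rather than a subsolution with an interior minimum (for which Hopf says nothing). This is handled precisely as in \cite{Galloway1}, and the Bakry--\'Emery modification changes nothing, since $\theta_f$ differs from $\theta$ only by the first-order term $-\nabla_\ell f$, which affects only the lower-order coefficient $b$ and neither the principal part nor the ellipticity.
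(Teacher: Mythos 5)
Your proposal is correct and follows essentially the same route as the paper: the paper's proof simply invokes the ellipticity of $\theta_f$ established in Subsection \ref{subsection: ellipticity of theta_f} and then appeals to the quasilinear maximum principle exactly as in Galloway's argument, which is what you carry out (in more detail) via the graph representation, the linearized elliptic inequality for the difference, and the strong maximum principle.
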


\begin{remark}\label{remark3.3}
The same conclusion holds if $S_1$ is a $C^0$ future null hypersurface and $S_2$ is a $C^0$ past null hypersurfaces and the inequality $\theta_f^1 \leq 0 \leq \theta_f^2$ holds in the sense of support hypersurfaces, if the support hypersurfaces of $S_1$ have null second fundamental forms locally bounded from below. The common hypersurface where $S_1$ and $S_2$ coincide (locally) is smooth.
\end{remark}

\begin{proof}
We observed in Subsection \ref{subsection: ellipticity of theta_f} that, just like the null mean curvature $\theta$, the Bakry--\'Emery null mean curvature $\theta_f$ is a second order quasilinear elliptic operator. The proof of the maximum principle follows from Alexandrov's maximum principle (see \cite[Thm.\ II.2]{Galloway1}).

To prove the Remark, note that the adjustment of this argument to the case of $C^0$-null hypersurfaces is obtained in the same way as \cite[Theorem 3.4]{Galloway1}.
\end{proof}

As a preliminary result before proving the null splitting, we will need the following lemma.

\begin{lemma} \label{lemma3.4}
Suppose that $S$ is an achronal $C^0$
null hypersurface in $(M,g,f)$. Let $m\in (-\infty,2-n] \cup (0,\infty]$ be such that $(M,g,f)$ satisfies the $m$-Bakry--\'Emery null energy condition.
\begin{itemize}
\item [i)] If $m\in (0,\infty)$,  assume that the null generators of $S$ are future complete.
\item [ii)] If $m\in  (-\infty,2-n]$ or $m=\infty$, assume that the null generators of $S$ are future $f$-complete.
\end{itemize}
Then $S$ satisfies $\theta_f \geq 0$ in the sense of support hypersurfaces, with null second fundamental forms locally bounded from below.
\end{lemma}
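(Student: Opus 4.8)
The plan is to transcribe Galloway's focal-point argument for $C^0$ null hypersurfaces (see \cite[Theorem~3.4 and its proof]{Galloway1}) to the weighted setting, using the Bakry--\'Emery Raychaudhuri inequalities \eqref{eq2.8} (valid for $m\in(0,\infty)$) and \eqref{eq2.13} (valid for $m\le 2-n$ or $m=\infty$) in place of the classical one. Fix $p\in S$ and let $\sigma$ be a null generator of $S$ through $p$, parametrized by affine parameter in case i) and by the parameter $\tau$ of \eqref{eq2.12} in case ii), normalized so that $\sigma(0)=p$. Since the generators of $S$ are future complete in case i) and future $f$-complete in case ii), $\sigma(b)$ is defined and lies in $S$ for every $b>0$; put $q_b:=\sigma(b)$ and consider the past null cone $W_b:=\partial I^-(q_b)$. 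Achronality of $S$ enters twice. First, because $q_b\in S$, the segment $\sigma|_{[0,b]}\subset S$ is a maximal null geodesic, hence contains no point conjugate to $q_b$ and no past cut point of $q_b$; so $W_b$ is an embedded smooth null hypersurface near $p$ whose null generators all emanate from $q_b$, and consequently its null expansion $\theta$ — and hence $\theta_f=\theta-\nabla_\ell f$ — is finite along $\sigma|_{[0,b)}$ and tends to $-\infty$ only at the vertex $q_b$. Second, $S\cap I^-(q_b)=\emptyset$, so near $p$ the surface $S$ lies on or to the future side of $W_b$ while $p\in\partial I^-(q_b)$; thus $W_b$ is a smooth null hypersurface lying locally to the past of $S$ and touching $S$ at $p$ along $\sigma$, i.e.\ a legitimate support hypersurface for the inequality $\theta_f\ge 0$.

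Suppose the conclusion fails at $p$: then there is $\eps_0>0$ such that every support hypersurface of the above type at $p$, in particular every $W_b$, has $\theta_f(p)<-\eps_0$. Along $\sigma$, viewed as a generator of the smooth hypersurface $W_b$, the $m$-Bakry--\'Emery null energy condition gives in case i) $\frac{d\theta_f}{dt}\le -\frac{\theta_f^2}{n+m-2}$ by \eqref{eq2.8}; starting from $\theta_f(p)<-\eps_0<0$ this forces $\theta_f\to-\infty$ within affine parameter $(n+m-2)/\eps_0$, independently of $b$. Since the null expansion of $W_b$ is finite up to $q_b=\sigma(b)$, this yields $b\le (n+m-2)/\eps_0$ for all $b$, contradicting future completeness of $\sigma$. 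In case ii) I would run the same argument for the quantity $x_f:=e^{2f\circ\sigma/(n-2)}\theta_f$ and the parameter $\tau$: since $f$ is finite, $\theta_f(p)<-\eps_0$ gives $x_f(0)<-\delta_0$ with $\delta_0:=e^{2f(p)/(n-2)}\eps_0>0$; by \eqref{eq2.13}, $\frac{dx_f}{d\tau}\le -\frac{x_f^2}{n-2}$, so $x_f\to-\infty$ within $\tau$-parameter $(n-2)/\delta_0$, whereas the $\tau$-length of $\sigma$ from $p$ to $q_b$ equals $b$ and is unbounded by future $f$-completeness. Either way we reach a contradiction, so $\theta_f\ge 0$ in the sense of support hypersurfaces.

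Finally, one must check that the support hypersurfaces can be taken with null second fundamental forms locally bounded from below. Over a compact neighbourhood $N$ of $p$ in $S$, and for a \emph{fixed} $b$, the relevant geodesic segments $\sigma_y|_{[0,b]}$ ($y\in N$) sweep out a compact set on which the ambient curvature is bounded; the full null Weingarten map of $W_b$ along $\sigma_y$ satisfies a matrix Riccati equation whose solution diverges to $-\infty$ at $q_b$, and Riccati comparison over a parameter length $b$ then bounds $B_\ell$ at $y$ from below by a constant depending only on $b$ and the curvature bound, uniformly in $y\in N$. A routine compactness argument (using continuous dependence of $W_b$ on its data together with the estimate of the previous paragraph) shows that a single, sufficiently large $b$ makes $\theta_f\ge-\eps$ throughout $N$; this $W_b$ then supplies the required support hypersurfaces, and the argument is the weighted counterpart of the estimates in \cite[Theorem~3.4]{Galloway1}.

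The main obstacle I anticipate is not the Raychaudhuri/blow-up step, which is a direct transcription once \eqref{eq2.8} and \eqref{eq2.13} are in hand, but the bookkeeping at the level of $C^0$ geometry: confirming that the past null cones $W_b$ are genuine smooth support hypersurfaces at \emph{every} $p\in S$ (including points on the past edge of $S$, where one first passes $p$ slightly into the interior or argues by continuity), and extracting the \emph{uniform} lower bound on their null second fundamental forms needed to invoke Lemma~\ref{lemma: BE null maximum principle} and Remark~\ref{remark3.3}.
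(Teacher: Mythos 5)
Your proposal is correct and follows essentially the same route as the paper: the support hypersurfaces are the boundaries of the pasts of points $\eta(r)$ along each generator, and the weighted Riccati inequalities \eqref{eq2.8} and \eqref{eq2.13} force the expansion of these cones to blow up before the vertex unless $\theta_f$ at the foot point is bounded below by a quantity tending to $0$ as $r\to\infty$ (resp.\ as the $\tau$-parameter tends to $\infty$). The only difference is presentational --- you argue by contradiction from $\theta_f(p)<-\eps_0$, whereas the paper integrates the reciprocal Riccati inequality backwards from the vertex to obtain the explicit lower bounds \eqref{eq3.2} and \eqref{eq3.10} directly --- and both treatments defer the uniform lower bound on the second fundamental forms to the compactness argument in Galloway's original proof.
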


\begin{proof}
We follow Galloway's proof of \cite[Lemma IV.2]{Galloway1}, mentioning only those modifications needed. As in that proof, we assume global hyperbolicity of $M$ and let $p \in S$ and let $K$ be a tangent to an affinely parameterized
future-inextendible
null generator $\eta:[0,\infty) \to M$, $\eta(0) = p$, of $S$ through $p$.
Let $S_{p,K,r}$ be a past support hypersurface $U\cap\partial J^-(\eta(r))$ of $S$ at $p$, for $U$ a suitable neighbourhood of $p$ and $r \in (0,T)$. Denote by $\theta_{f}(s)$ the Bakry--\'Emery null mean curvature of $S_{p,K,r}$ at $\eta(s)$ with respect to $\eta'(s)$.

When $m\in (-\infty,2-n]\cup \{ \infty\}$, we claim that
\begin{equation}
\label{eq3.2}
\theta_f(0) \geq - \frac{(n-2)e^{-\frac{2f \circ \eta(0)}{(n-2)}}}{\int_0^r e^{-\frac{2f \circ \eta(s)}{(n-2)}} ds}.
\end{equation}
To prove the claim, by \eqref{eq2.11} the null mean curvature of the support surface obeys
\begin{equation}
\label{eq3.3}
\frac{dx_f}{dt} \leq - \frac{x_f^2(t) e^{-\frac{2f \circ \eta(t)}{(n-2)}}}{(n-2)},\quad x_f(t) = e^{\frac{2f \circ \eta(t)}{n-2}} \theta_f
\end{equation}
whenever $m\in (-\infty,2-n]\cup \{ \infty\}$. Without loss of generality, we may take $x_f(0)<0$ for otherwise the claim is trivially true, and then we see from \eqref{eq3.3} that $x_f(t)<0$ for all $t\ge 0$. Let $s:=r-\delta -t$ for $\delta > 0$. Then
\begin{equation}
\label{eq3.4}
\frac{dx_f}{ds} \geq  \frac{x_f^2(r-\delta-s)e^{-\frac{2f \circ \eta(r-\delta-s)}{(n-2)}}}{(n-2)},
\end{equation}
so
\begin{equation}
\label{eq3.5}
\frac{d}{ds} \left ( \frac{1}{x_f}\right ) \le - \frac{e^{-\frac{2f \circ \eta(r-\delta-s)}{(n-2)}}}{(n-2)}.
\end{equation}
We integrate backwards along $\eta$ from $s=0$ (where $t=r-\delta$) to $s=r-\delta$ (where $t=0$). Then
\begin{equation}
\label{eq3.6}
\begin{split}
\frac{1}{x_f(0)}-\frac{1}{x_f(r-\delta)}\le &\, -\frac{1}{(n-2)}\int\limits_0^{r-\delta} e^{-\frac{2f \circ \eta(r-\delta-s)}{(n-2)}}ds = \frac{1}{(n-2)}\int\limits_{r-\delta}^0 e^{-\frac{2f \circ \eta(t)}{(n-2)}}dt\\
=&\, -\frac{1}{(n-2)}\int\limits_0^{r-\delta} e^{-\frac{2f \circ \eta(t)}{(n-2)}}dt.
\end{split}
\end{equation}
Now we use our earlier observation that $x_f(t)<0$. The inequality \eqref{eq3.6} then yields
\begin{equation}
\label{eq3.7}
\frac{1}{x_f(0)}\le -\frac{1}{(n-2)}\int\limits_0^{r-\delta} e^{-\frac{2f \circ \eta(t)}{(n-2)}}dt.
\end{equation}
The claim follows by multiplying both sides of \eqref{eq3.7} by $e^{\frac{2f \circ \eta(0)}{(n-2)}}$, taking reciprocals, and taking the limit $\delta\searrow 0$.

It is then a consequence of future $f$-completeness of $\eta$ that the lower bound approaches $0$ as $r \nearrow T$. Thus, we conclude that $S$ satisfies $\theta_f \geq 0$ in the sense of the support hypersurfaces $S_{p,K,r}$. The fact that the second fundamental forms of $S_{p,K,r}$ are locally bounded from below and that the above arguments extend to the case where $M$ is not globally hyperbolic follow from the same arguments given by Galloway in the proof of \cite[Lem.\ IV.2]{Galloway1}.

Comparing inequalities \eqref{eq2.8} and \eqref{eq2.13}, we see the the manipulations immediately above will yield \eqref{eq3.8} with $x_f$ replaced by $\theta_f$ and $n-2$ replaced by $n+m-2$. The inequality \eqref{eq3.5} becomes simply
\begin{equation}
\label{eq3.8}
\frac{d}{ds} \left ( \frac{1}{x_f}\right ) \le - \frac{1}{(n+m-2)}.
\end{equation}
Integrating as in \eqref{eq3.6}, we get
\begin{equation}
\label{eq3.9}
\frac{1}{x_f(0)}-\frac{1}{x_f(r-\delta)} \le -\frac{r-\delta}{(n+m-2)},
\end{equation}
and in place of \eqref{eq3.2}, for $m>0$ we obtain that
\begin{equation}
\label{eq3.10}
\theta_f(0) \ge - \frac{(n+m-2)}{r}.
\end{equation}
Since $\eta$ is future-complete, we take $r\to\infty$ to obtain that $\theta_f(0) \ge 0$ as claimed.
\end{proof}

\begin{proof}[Proof of Theorem \ref{theorem3.1}]

The argument follows \cite{Galloway1} closely, so we provide only an outline illustrating the differences. Recall \eqref{eq2.8}, which reads as
\begin{equation}
\label{eq3.11}
\begin{split}
\frac{d\theta_f}{dt} =&\, -\ric_f^m (\ell,\ell) -\frac{1}{m}\left ((f\circ\gamma)'\right )^2 -|\tf B|^2-\frac{1}{(n-2)}\theta^2\\
\le &\, -\frac{\theta_f^2}{(n+m-2)}.
\end{split}
\end{equation}
We regard this equation as governing Jacobi fields along a null geodesic $\eta$. Standard arguments applied to inequalities of the form $x_f'\le -kx_f^2$, $k>0$, so that $x$ will diverge at a finite value of its argument unless it is always zero. If such a divergence were to occur, it would occur in $\theta$, since $f\in C^2(M)$. This signals a conjugate pair of points along $\eta$, contradicting the assumption that $\eta$ is a line. Hence $\theta_f=0$ along $\eta$ and, since we assume that $\ric_f^m(\ell,\ell)\ge 0$, every term on the right of the first line of \eqref{eq3.11} must vanish. Note in particular that $(f\circ\eta)'$ must vanish, so $f$ is constant along $\eta$ and moreover the vanishing of $\theta_f$ then implies the vanishing of $\theta$. Then the entire null second fundamental form $B$ vanishes along $\eta$, and $\ric(\ell,\ell)=0$ along $\eta$ as well. Finally, the boundary $S$ of the past of $\eta$ is an achronal null hypersurface generated by null lines, so following \cite{Galloway1} one can apply this analysis to each null generator.

Now consider $m\in (-\infty,2-n)\cup \{ \infty \}$. For this case, we refer to \eqref{eq2.9}, which reads as
\begin{equation}
\label{eq3.12}
\begin{split}
\frac{d}{dt}\left ( e^{\frac{2f\circ\gamma}{n-2}}\theta_f\right ) =&\,  -e^{\frac{2f\circ\gamma}{n-2}}\left [ \ric_f^m (\ell,\ell) +\frac{(n+m-2)}{m(n-2)}\left ((f\circ\gamma)'\right )^2\right .\\
&\, \left .+|\tf B|^2+\frac{\theta_f^2}{(n-2)}\right ].
\end{split}
\end{equation}
Since $x_f=e^{\frac{2f\circ\eta}{(n-2)}}\theta_f=e^{\frac{2f\circ\eta}{(n-2)}}\left ( \theta -(f\circ\eta)'\right )$, we can use the parameter $\tau$ defined by \eqref{eq2.12} to write that $x_f={\tilde \theta} -\frac{d}{d\tau} f\circ \eta$, where ${\tilde \theta}:=\frac{d}{d\tau}\log |J|$, with $|J|$ being the determinant of the matrix of Jacobi fields along $\eta$. Then a divergence in $x_f$ indicates a conjugate pair in the reparametrized geodesic $\eta(\tau)$. This would contradict the assumption that $\eta$ is a complete null $f$-line. Since $\ric_f^m(\ell,\ell)\ge 0$ and $m<2-n\le -1$, every term on the right must therefore vanish. Again $(f\circ\eta)'$ must vanish, so $f$ is constant along $\eta$, $\theta$ vanishes along $\eta$, $B$ vanishes along $\eta$, and $\ric(\ell,\ell)=0$ along $\eta$. Again the argument can be extended to $S$ by applying it to each null line generating the boundary of the past of $\eta$.

Finally, we consider the case of $m=2-n$. Arguing as above, we obtain that $x_f$, $\tf B$, and $\ric_f^m(\ell,\ell)$ vanish along $\eta$ (and along each null generator of the boundary of the past of $\eta$), but now we have only that $\tr B := \theta =\frac{d}{dt}f\circ\eta$.
\end{proof}

\subsection{Penrose singularity theorems}\label{section3.2}

\begin{proof}[Proof of Theorem \ref{theorem1.1}]
Suppose $(M,g)$ is future null geodesically complete (resp.\ future null geodesically $f$-complete). It suffices to show that null geodesics normal to $S$ reach focal points, the rest of the argument is the same as in the non-weighted case (see e.g.\ Hawking--Ellis \cite[pp 263--264]{HE}). Recalling \eqref{eq2.8} (with $\theta_f$ denoting either $\theta_f^+$ or $\theta_f^-$), then
\begin{equation}
\label{eq3.13}
\frac{d \theta_f}{dt} \leq -\frac{\theta_f^2}{n+m - 2}.
\end{equation}
Integrating this inequality from $0$ to $t$ yields
\begin{equation}
\label{eq3.14}
\frac{1}{\theta_f(t)} \geq \frac{t}{n+m - 2} + \frac{1}{\theta_f(0)}.
\end{equation}
Since $\theta_f(0) < 0$, for large enough $t$ (which we are allowed to take by our assumption of null geodesic completeness) the right hand side becomes $0$ at some $t=T>0$, which shows that necessarily $\theta_f(t) \to -\infty$ for some $t\searrow\tau$ where $0<\tau\le T$. Since $\nabla f(\gamma(t))$ is bounded for all $t$ in a bounded interval, where $\gamma$ is the future null normal of $S$ corresponding to $\theta_f$, then necessarily $\theta(\tau)$ diverges to  $-\infty$, which shows that $\eta(\tau)$ is a focal point for $S$.

In the case $m \in (-\infty, 2-n]\cup \{ \infty\}$, the argument proceeds with \eqref{eq2.13} instead. By null $f$-completeness, the parameter $\tau$ defined in \eqref{eq2.12} can be made arbitrarily large if the affine parameter $t$ is chosen arbitrarily large if $f$ is bounded above. As above, we are able to obtain the existence of some value $\tau=T$ such that $x_f(T)$ diverges to $-\infty$. Then $\theta_f(T)$ diverges to $-\infty$, which again implies that $\theta(T)$ diverges to $-\infty$, and we obtain the existence of focal points to $S$ in this case.
\end{proof}

Let us also note the following Bakry--\'Emery version of the one-sided Penrose singularity theorem (cf.\ Andersson--Mars--Simon \cite[Theorem 7.1]{AMS}). The proof follows from the reference given as well as a focusing argument analogous to the one we have given above in the case of the usual Penrose theorem. It would have applications to Bakry--\'Emery versions of the Gannon-Lee singularity theorems \cite{Gannon1, Gannon2, CWLee}

\begin{theorem}[One-sided Bakry--\'Emery Penrose singularity theorem]\label{theorem3.5}
Let $(M,g)$ be a globally hyperbolic spacetime satisfying $\ric^m_f(\ell,\ell) \geq 0$ for all null $\ell \in T_pM$ and all $p\in M$, for some $m \in (-\infty,2-n] \cup (0,\infty]$, $n= \dim M \geq 3$, and for $f \in C^2(M)$. Suppose that $M$ contains a smooth spacelike Cauchy surface $V$ which in turn contains a closed hypersurface $\Sigma$ (i.e., $\Sigma$ is a codimension-$2$ surface in $M$) that separates $V$ into two disconnected parts $V \setminus \Sigma = V^- \cup V^+$. Call $\theta_f^+$ the null $f$-expansion of $\Sigma$ with respect to the future null normal pointing into $V^+$. Suppose that $\theta_f^+|_S < 0$ and that $V^+ \cup \Sigma$ is a connected, non-compact manifold with boundary $\Sigma$.
\begin{itemize}
\item [i)] If $m\in (0,\infty)$ then $(M,g)$ is future null geodesically incomplete.
\item [ii)] If $m\in (-\infty,n-2]\cup \{ \infty \}$ then $(M,g)$ is future null geodesically $f$-incomplete.
\end{itemize}
\end{theorem}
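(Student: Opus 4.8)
The plan is to argue by contradiction: assume $(M,g)$ is future null geodesically complete in case (i), resp.\ future null geodesically $f$-complete in case (ii), and derive a contradiction by combining the focusing estimates already used for Theorems \ref{theorem1.1} and \ref{theorem3.1} with a topological argument in the spirit of Penrose's original proof and of Andersson--Mars--Simon \cite{AMS}. The single new feature compared with the ordinary Penrose theorem is that only the expansion $\theta_f^+$ along the null normal $\ell^+$ pointing into $V^+$ is controlled; so in place of $\partial J^+(\Sigma)$ one works with the future causal boundary of the ``interior'' region $\overline{V^-}:=V\setminus V^+=V^-\cup\Sigma$, whose null generators issue from the edge $\Sigma$ precisely in the $\ell^+$-direction and are therefore governed by $\theta_f^+$.

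The focusing step: since $\Sigma$ is compact and $\theta_f^+|_\Sigma<0$, pick $c>0$ with $\theta_f^+|_\Sigma\le-c$, and for $p\in\Sigma$ let $\eta_p$ be the null geodesic with $\eta_p(0)=p$ and $\eta_p'(0)=\ell^+(p)$. Running \eqref{eq2.8} (case (i), affine parameter) or \eqref{eq2.13} (case (ii), the parameter $\tau$ of \eqref{eq2.12}) along $\eta_p$, exactly as in the proofs of Theorems \ref{theorem1.1} and \ref{theorem3.1}, and using boundedness of $f|_\Sigma$, one obtains a uniform parameter value $T_0$ (independent of $p$) at which $\theta_f^+\circ\eta_p$, resp.\ $x_f\circ\eta_p$, diverges to $-\infty$; since $f\in C^2$ this forces $\theta\circ\eta_p\to-\infty$, i.e.\ $\eta_p$ has a focal point to $\Sigma$ by parameter $T_0$. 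The completeness (resp.\ $f$-completeness) assumption is used precisely to guarantee that each $\eta_p$ is defined up to that parameter value.

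Next I would set up the topology and extract the contradiction. Put $\mathcal N:=\partial J^+(\overline{V^-})\setminus\overline{V^-}$. Since $\overline{V^-}$ is a spacelike hypersurface-with-boundary with edge $\Sigma$, the set $\partial J^+(\overline{V^-})=\partial I^+(\overline{V^-})$ is a closed achronal $C^0$ hypersurface without boundary, it contains $\overline{V^-}$, and $\mathcal N$ is generated by the null geodesics issuing from $\Sigma$ in the unique null normal direction pointing away from $\overline{V^-}$ --- i.e.\ by the $\eta_p$ --- each of which leaves $\partial J^+(\overline{V^-})$ no later than its first focal point to $\Sigma$ (past which it enters $I^+(\Sigma)\subseteq I^+(\overline{V^-})$; cf.\ \cite[\S 4.5]{HE}, \cite{Galloway1}). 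By the focusing step, $\mathcal N$ lies in the image of the compact set $\Sigma\times[0,T_0]$ under the continuous map $(p,s)\mapsto\eta_p(s)$ ($s$ affine in case (i), $s=\tau$ in case (ii)), so $\overline{\mathcal N}$ is compact and contained in $\partial J^+(\overline{V^-})$. Now fix a splitting $M\cong V\times\mathbb R$ with timelike $\mathbb R$-factor (from a Cauchy temporal function) and let $\phi\colon M\to V$ be the projection; restricted to the topological hypersurface $\partial J^+(\overline{V^-})$ the map $\phi$ is continuous and injective (injectivity from achronality, since the fibres of $\phi$ are timelike), hence open onto its image by invariance of domain. But that image is $\phi(\overline{V^-})\cup\phi(\mathcal N)=\overline{V^-}\cup\phi(\overline{\mathcal N})$, a union of a closed set and a compact set, hence closed in $V$; being nonempty, open and closed in the connected manifold $V$, it equals $V$. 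Therefore $V^+=V\setminus\overline{V^-}\subseteq\phi(\overline{\mathcal N})$, so $V^+\cup\Sigma\subseteq\phi(\overline{\mathcal N})$ is a closed subset of a compact set --- hence compact, contradicting the hypothesis that $V^+\cup\Sigma$ is noncompact.

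The step I expect to be the main obstacle is the causal-structure bookkeeping in the third paragraph: verifying that $\partial J^+(\overline{V^-})$ really is $\overline{V^-}$ together with the $\ell^+$-null cone over $\Sigma$, that its generators terminate at focal points to $\Sigma$, and that invariance of domain applies to the merely $C^0$ hypersurface $\partial J^+(\overline{V^-})$. All of this is standard and parallels \cite{Galloway1, HE} and the one-sided setup of \cite{AMS}, but it is where the argument genuinely diverges from the two-sided Penrose theorem; the $\tau$-parameter adjustments in case (ii) are the same routine substitutions already made in the proof of Theorem \ref{theorem1.1}.
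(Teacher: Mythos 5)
Your proposal is correct and follows exactly the route the paper intends: the paper gives no detailed argument, saying only that the result ``follows from [Andersson--Mars--Simon, Theorem 7.1] as well as a focusing argument analogous to the one we have given above,'' and your proof is precisely that — the weighted focusing estimates \eqref{eq2.8}/\eqref{eq2.13} giving a uniform focal-point parameter, combined with the standard one-sided topological argument on $\partial J^+(\overline{V^-})$ via achronality, invariance of domain, and the compactness contradiction with $V^+\cup\Sigma$. The causal-structure details you flag (generators of $\partial J^+(\overline{V^-})\setminus\overline{V^-}$ emanating from $\Sigma$ only in the $\ell^+$-direction, termination at focal points) all check out as in the unweighted case.
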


Also the rigidity result corresponding to the above one-sided Bakry--\'Emery Penrose singularity theorem can be established in analogy with the non-Bakry--\'Emery (i.e., $m = 0$) case. See \cite[Proposition 3]{GL} and \cite[Section 7]{EGP} for a proof.

\section{Marginally outer trapped surfaces}\label{section4}
\setcounter{equation}{0}

\subsection{Stability formula} \label{section4.1}
The stability operator for marginally trapped surfaces was written down (for 4-dimensional spacetimes) in \cite{Newman}, following a method used to study minimal surfaces in Riemannian manifolds. The basic technique had already been employed by Hawking (see, e.g., \cite[Proposition 9.3.2]{HE}), and has been followed by other authors since, such as \cite{AMS} and \cite{GS}. An $m=\infty$ Bakry--\'Emery version of the technique appears in \cite{RW}. Here we recast that work and extend it to all positive $m$-values (and also correct an oversight in the argument in \cite{RW}). We point the reader to the clear and general discussion in \cite[Section 7.5]{Lee} (for the non-Bakry--\'Emery case), whose notation we will follow.

We begin with an initial data hypersurface $(V,g_V,K)$ in spacetime. Here $g_V$ and $K$ are, respectively, the first and second fundamental forms induced on the hypersurface $V\hookrightarrow M$ by the spacetime metric $g$ on $M$. In $V$, we define a closed two-sided hypersurface $\Sigma$ (hence $\Sigma$ is codimension-$2$ in $M$) with first and second fundamental forms $h$ and $A$ induced by the inclusion $\Sigma\hookrightarrow V$. We denote the mean curvature of $\Sigma\hookrightarrow V$ as $H:=\tr_h A=:\tr_{\Sigma}A$. We will also use the notation $\tr_{\Sigma}K:=h^{ij}K_{ij}$ to denote the $h$-partial trace of $K$. Let $w$ denote the unit future directed normal field to $V$ and $\nu$ the `outward' unit normal field to $\Sigma$ and tangent to $V$. Define $\ell = w + \nu$ and $k = w - \nu$, which are \emph{outbound} and \emph{inbound} null vector fields on $\Sigma$, respectively.  The first fundamental form, or induced metric, on $\Sigma$ can be written as
\begin{equation}
\label{eq4.1}
h_{\mu\nu}=g_{\mu\nu}+\frac12 \left ( k_{\mu}\ell_{\nu}+\ell_{\mu}k_{\nu} \right ).
\end{equation}
The null second fundamental form defined by $\ell$ is
\begin{equation}
\label{eq4.2}
\theta:= \divergence_{\Sigma} \ell = h^{ij}\nabla_i \ell_j,\quad \ell_j:=g_{ij}\ell^i.
\end{equation}
We also define
\begin{equation}
\label{eq4.3}
\kappa:= \divergence_{\Sigma} k = h^{ij}\nabla_i k_j,\quad k_j:=g_{ji}k^i.
\end{equation}

Now consider the variation vector field $\varphi\nu$ for $\varphi\in C^{\infty}(\Sigma)$. This generates a variation $\Sigma_t$ of $\Sigma$ within $V$ via $x \mapsto \exp_x\left ( t\varphi(x)\nu(x) \right )$, where $\exp$ denotes the exponential map of the metric on $V$. To simplify notation, we suppress the inclusion maps $\Sigma\hookrightarrow V$, $V\hookrightarrow M$, $\Sigma \hookrightarrow M$, and the induced pullbacks. We also ignore a possible component of the variation tangent to $\Sigma$. We extend $\ell$ in a neighborhood of $\Sigma$ within $V$ by defining $\ell = w + \nu$, where $\nu$ is determined by the variation. Then the linearization of $\theta$ (its derivative along any extension of the vector field $\varphi\nu$, evaluated at $\Sigma$) is given by \cite[Proposition 7.32]{Lee} as
\begin{equation}
\label{eq4.4}
\begin{split}
D\theta=&\, -\Delta_{\Sigma} \varphi +2 W_\Sigma\cdot \nabla_\Sigma \varphi +\left [ \divergence_{\Sigma} W_{\Sigma} -\left \vert W_{\Sigma}\right \vert^2 +\frac12 R_{\Sigma} \right .\\
&\, \left . -G(w,\ell) -\frac12 \left \vert K_{\Sigma} +A\right \vert^2 +\frac12 \theta\left ( \kappa +2K(\nu,\nu)\right )\right ]\varphi.
\end{split}
\end{equation}
To lessen the burden of notation, we leave implicit the fact that each term in this expression is evaluated on $\Sigma$. Here $W_\Sigma$ is the tangential vector field on $\Sigma$ that is dual to the 1-form $K(\nu,\cdot)$ along $\Sigma$,
$G$ denotes the Einstein tensor of the spacetime metric, and $K_{\Sigma}$ is the restriction of $K$ to $T_p\Sigma$ for $p\in \Sigma$ (likewise $W_{\Sigma}$, whereas $R_{\Sigma}$ is the scalar curvature of the induced metric $h$ on $\Sigma$). Expression \eqref{eq4.4} gives the linearized expansion $\theta$ for any spacelike codimension-$2$ surface $\Sigma$.

We now add $f$-terms to both sides of \eqref{eq4.4}. On the left-hand side we use
\begin{equation}
\label{eq4.5}
\begin{split}
D\theta_f=&\, D\theta-\varphi\nabla_{\nu}\nabla_{\ell} f = D\theta-\left [ \hess f (\nu,\ell) +\nabla_\nu\ell \cdot\nabla f\right ] \varphi.
\end{split}
\end{equation}
To simplify the right-hand side, we define
\begin{equation}
\label{eq4.6}
R^{f,m}:=R+2\Box f -\frac{(m+1)}{m}|\nabla f|^2
\end{equation}
and
\begin{equation}
\label{eq4.7}
G_{\mu\nu}^{f,m}:=R_{\mu\nu}^{f,m} -\frac12 R^{f,m} g_{\mu\nu}.
\end{equation}
If $G^{f,m}(u,v)\ge 0$ for all future-causal vectors $u$ and $v$ and some function $f$ and number $m$, we will say that the \emph{$m$-Bakry--\'Emery dominant energy condition holds}. Note that $R^{f,m}$ is not the trace of $\ric^{f,m}$ (except when $f=const$).

Next, we collect some of the expressions we will need. First, using $\ric=\ric^{f,m}-\hess f +\frac{1}{m}df\otimes df$ and choosing $w=\frac12 (\ell+k)$, we get that
\begin{equation}
\label{eq4.8}
\ric(w,\ell) = \ric^{f,m}(w,\ell) -\frac12 \hess f (k,\ell)-\frac12\hess (\ell,\ell) +\frac{1}{2m}\nabla_kf\nabla_{\ell} f +\frac{1}{2m}\left(\nabla_{\ell} f\right)^2.
\end{equation}
Next, we expand $\Box f$ using
\begin{equation}
\label{eq4.9}
\Box f =\Delta_{\Sigma}f-\left ( \hess f\right )(k,\ell ).
\end{equation}
This allows us to write that
\begin{equation}
\label{eq4.10}
\begin{split}
R=&\, R^{f,m}-2\Box f +\frac{(m+1)}{m}|\nabla f|^2\\
=&\, R^{f,m}-2\Delta_{\Sigma}f+2\hess f (k,\ell)-\frac{(m+1)}{m}\nabla_k f\nabla_{\ell}f+\frac{(m+1)}{m}\left \vert\nabla_{\Sigma}f\right\vert^2.
\end{split}
\end{equation}
Using \eqref{eq4.8} and \eqref{eq4.10}, then
\begin{equation}
\label{eq4.11}
\begin{split}
G(w,\ell) =&\,  \ric(w,\ell)-\frac12 R g(w,\ell) = \ric(w,\ell)+\frac12 R\\
=&\, G^{f,m}(w,\ell) +\frac12 \hess f (k-\ell,\ell) -\Delta_{\Sigma} f -\frac12\nabla_kf\nabla_{\ell} f +\frac{1}{2m}\left(\nabla_{\ell} f\right)^2\\
&\, +\frac{(m+1)}{2m}\left \vert\nabla_{\Sigma}f\right\vert^2.
\end{split}
\end{equation}
Then \eqref{eq4.4}, \eqref{eq4.5}, and \eqref{eq4.11} yield
\begin{equation}
\label{eq4.12}
\begin{split}
D\theta_f=&\, -\Delta_{\Sigma} \varphi +2W_\Sigma\cdot \nabla_\Sigma \varphi+\bigg[ \divergence_{\Sigma} W_{\Sigma} -\left \vert W_{\Sigma}\right \vert^2 +\frac12 R_{\Sigma} \\
&\,-G^{f,m}(w,\ell)-\frac12 \hess f (k-\ell,\ell) +\Delta_{\Sigma} f +\frac12\nabla_kf\nabla_{\ell} f -\frac{1}{2m}\left(\nabla_{\ell} f\right)^2 \\
&\, \left . -\frac{(m+1)}{2m}\left \vert\nabla_{\Sigma}f\right\vert^2 -\frac12 \left \vert K_{\Sigma} +A\right \vert^2 +\frac12 \theta\left ( \kappa +2K(\nu,\nu)\right )\right . \\
&\,  -\hess f (\nu,\ell) - \nabla_\nu\ell \cdot\nabla f\bigg ]\varphi,
\end{split}
\end{equation}
where once again each term in this expression is intended to be evaluated on $\Sigma$. Using
\begin{equation}
\label{eq4.13}
R^{f,m}_{\Sigma}:=R_{\Sigma}+2\Delta_{\Sigma}f-\frac{(m+1)}{m}\left \vert \nabla_{\Sigma} f\right \vert^2,
\end{equation}
and recognizing that the two Hessian terms cancel in \eqref{eq4.12}, we obtain
\begin{equation}
\label{eq4.14}
\begin{split}
D\theta_f
=&\, -\Delta_{\Sigma} \varphi +2W_\Sigma\cdot \nabla_\Sigma \varphi +\bigg [ \divergence_{\Sigma} W_{\Sigma} -\left \vert W_{\Sigma}\right \vert^2 +\frac12 R_{\Sigma}^{f,m} \\
&\,-G^{f,m}(w,\ell)+\frac12\nabla_kf\nabla_{\ell} f -\frac{1}{2m}\left(\nabla_{\ell} f\right)^2 -\frac12 \left \vert K_{\Sigma} +A\right \vert^2\\
&\, +\frac12 \theta\left ( \kappa +2K(\nu,\nu)\right ) - \nabla_\nu\ell \cdot\nabla f\bigg]\varphi .
\end{split}
\end{equation}

Following \cite{Galloway4}, we write the mean curvature of the hypersurface $V$ as
\begin{equation}
\label{eq4.15}
\tau:=\tr_V K= \frac12\left ( \theta+\kappa\right ) +K(\nu,\nu),
\end{equation}
which allows us to rewrite $ \kappa +2K(\nu,\nu)$ as
\begin{equation}
\label{eq4.16}
\begin{split}
\kappa +2K(\nu,\nu)=&\, 2\tau -\theta\\
=&\, 2\tau -\theta_f-\nabla_{\ell}f.
\end{split}
\end{equation}
We have
\begin{equation}
\label{eq4.17}
\begin{split}
D\theta_f=&\, -\Delta_{\Sigma} \varphi +2W_\Sigma\cdot \nabla_\Sigma \varphi+\bigg[ \divergence_{\Sigma} W_{\Sigma} -\left \vert W_{\Sigma}\right \vert^2 +\frac12 R_{\Sigma}^{f,m}  \\
&\,-G^{f,m}(w,\ell) -\frac12 \left \vert K_{\Sigma} +A\right \vert^2 + \theta_f \left ( \tau -\frac12\theta_f -\nabla_{\ell}f\right ) \\
&\,   +\left ( \tau +\frac12\nabla_kf -\frac{(m+1)}{2m}\nabla_{\ell}f\right ) \nabla_{\ell} f -\nabla_\nu\ell \cdot\nabla f \bigg]\varphi .
\end{split}
\end{equation}

Now we assume that $\Sigma$ is $f$-minimal; i.e., $\theta_f=0$, and that $\theta_f$ is stable (at linear order) against perturbations of $\Sigma$ (since its outermost by assumption). That is, we require that for any smooth function $\varphi:\Sigma\to {\mathbb R}$, we have
\begin{equation}
\label{eq4.18}
\begin{split}
0\le &\, L^f\varphi\\
:=&\, -\Delta_{\Sigma} \varphi +2W_\Sigma\cdot \nabla_\Sigma \varphi +\bigg [ \divergence_{\Sigma} W_{\Sigma} -\left \vert W_{\Sigma}\right \vert^2 +\frac12 R_{\Sigma}^{f,m}  \\
&\, -G^{f,m}(w,\ell) -\frac12 \left \vert K_{\Sigma} +A\right \vert^2
+\left ( \tau +\frac12\nabla_kf -\frac{(m+1)}{2m}\nabla_{\ell}f\right )\nabla_\ell f\\
&\,  -\nabla_\nu\ell \cdot\nabla f \bigg ]\varphi .
\end{split}
\end{equation}

\subsection{A special case}\label{section4.2}
Now we further assume that
\begin{equation}
\label{eq4.19}
\nabla_{\ell}f\vert_{\Sigma}=0.
\end{equation}
See Section \ref{section5} and Lemma \ref{lemma5.5} for a discussion of this assumption. Then \eqref{eq4.17} reduces to
\begin{equation}
\label{eq4.20}
\begin{split}
0\le &\, -\Delta_{\Sigma} \varphi +Df\cdot D\varphi - \varphi K(Df,\nu) +2 W_{\Sigma}\cdot D\varphi +\bigg [ \divergence_{\Sigma} W_{\Sigma} -\left \vert W_{\Sigma}\right \vert^2   \\
&\, +\frac12 R_{\Sigma}^{f,m} -G^{f,m}(w,\ell) -\frac12 \left \vert K_{\Sigma} +A\right \vert^2 \bigg ]\varphi ,
\end{split}
\end{equation}
where we used Lemma \ref{lemmaC.1} in Appendix \ref{appendixC} to replace the $-\nabla_\nu\ell \cdot\nabla f$ term . Also, we let $D := \nabla_\Sigma$ to denote the gradient with respect to the metric on $\Sigma$ (since $D$ is no longer used to denote the variation of $\theta_f$).

From here on, the calculation is almost standard.  We will eliminate the first-derivative term $2\left \langle W_{\Sigma},D\varphi\right \rangle_{\Sigma}$ in favour of a term with definite sign. To do this, write $\varphi=e^u$. Then
\begin{equation}
\label{eq4.21}
\begin{split}
&\, -\Delta_{\Sigma}\varphi+Df\cdot D\varphi +2W_{\Sigma}\cdot D\varphi\\
=&\, e^u\left [ -\Delta_{\Sigma} u +Df\cdot Du -\left \vert W_{\Sigma}-Du\right \vert^2 +\left \vert W_{\Sigma}\right \vert^2 \right ],
\end{split}
\end{equation}
so \eqref{eq4.20} becomes
\begin{equation}
\label{eq4.22}
\begin{split}
0\le &\, -\Delta_{\Sigma}u +Df\cdot Du -K(Df, \nu)+\divergence_{\Sigma} W_{\Sigma} -\left \vert W_{\Sigma}-Du\right \vert^2 +Q\\
=&\, \divergence_{\Sigma}\left ( W_{\Sigma} - Du \right ) +Du\cdot Df -K(Df,\nu)-\left \vert W_{\Sigma}-Du\right \vert^2 +Q,
\end{split}
\end{equation}
where
\begin{equation}
\label{eq4.23}
Q:=\frac12R_{\Sigma}^{f,m} -G_{\mu\nu}^{f,m}\ell^{\mu}w^{\nu}-\frac12 \left \vert K_{\Sigma} +A\right \vert^2 .
\end{equation}

Multiplying \eqref{eq4.22} by $\psi^2$ for some $\psi\in C^{\infty}(\Sigma)$, we obtain
\begin{equation}
\label{eq4.24}
\begin{split}
0\le &\, \divergence_{\Sigma}\left ( \psi^2\left ( W_{\Sigma}-Du\right ) \right ) -2\psi\left ( W_{\Sigma}-Du\right )\cdot D\psi +\psi^2 Du\cdot Df\\
&\, -\psi^2 K(Df,\nu) -\left \vert W_{\Sigma}-Du\right \vert^2_{\Sigma}\psi^2+Q\psi^2\\
\le &\,  \divergence_{\Sigma}\left ( \psi^2\left ( W_{\Sigma}-Du\right ) \right ) +2|\psi| \left \vert W_{\Sigma}-Du\right \vert  |D\psi|+\psi^2 Du\cdot Df\\
&\, -\psi^2 K(Df,\nu) -\left \vert W_{\Sigma}-Du\right \vert^2_{\Sigma}\psi^2+Q\psi^2\\
= &\, \divergence_{\Sigma}\left ( \psi^2\left ( W_{\Sigma}-Du\right ) \right ) -\left ( |D\psi | -\left \vert W_{\Sigma}-Du\right \vert |\psi |\right )^2\\
&\, +\psi^2 Du\cdot Df - \psi^2 K(Df,\nu)+|D\psi |^2 +Q\psi^2 \\
\le &\, \divergence_{\Sigma}\left ( \psi^2\left ( W_{\Sigma}-Du\right ) \right )+\psi^2 Du\cdot Df - \psi^2 K(Df,\nu)+|D\psi |^2 +Q\psi^2
\\
= &\, \divergence_{\Sigma,f}\left (\psi^2 \left ( W_{\Sigma}  - Du\right )\right ) + |D\psi|^2 + Q\psi^2,
\end{split}
\end{equation}
where $\divergence_{\Sigma,f}(X) := \divergence_\Sigma(X) - Df\cdot X$ for vector fields $X$ on $\Sigma$. In the last equality, we used the fact that $K(Df,\nu) = Df \cdot W_\Sigma$ (which follows immediately from the definition of $W_{\Sigma}:=K(\cdot,\nu)$).

Multiplying \eqref{eq4.24} by $e^{-f}$ and integrating over $\Sigma$, we obtain
\begin{equation}
\label{eq4.25}
\begin{split}
\int_{\Sigma} ( |D\psi |^2 +Q\psi^2 )e^{-f}dV_{\Sigma} &\ge 0.
\end{split}
\end{equation}

Recall that $\Delta_{\Sigma,f} \psi = \divergence_{\Sigma,f}(D \psi) = \divergence_\Sigma(D\psi) - Df\cdot D\psi$. Consider the weighted eigenvalue problem:
\begin{equation}
\label{eq4.26}
\begin{split}
\mu \psi = &\, -\Delta_{\Sigma,f}\psi +Q\psi \\
1=&\, \int_{\Sigma} \psi^2 e^{-f}dV_{\Sigma},
\end{split}
\end{equation}
on a compact manifold $\Sigma$. The Rayleigh quotient for \eqref{eq4.26} is given by the left-hand-side of \eqref{eq4.25}. Hence we have the following result.

\begin{lemma}\label{lemma4.1}
The principal eigenvalue $\mu_1$ for the problem \eqref{eq4.26} is nonnegative.
\end{lemma}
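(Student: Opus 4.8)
The plan is to identify $\mu_1$ as the bottom of the spectrum of a self-adjoint Schr\"odinger-type operator on $\Sigma$ and to recognize the inequality \eqref{eq4.25} as the statement that the associated Rayleigh quotient is nonnegative.

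First I would note that, since $f|_\Sigma$ is smooth, the weighted measure $e^{-f}\,dV_\Sigma$ is a smooth positive measure on the compact manifold $\Sigma$, with respect to which the drift Laplacian $\Delta_{\Sigma,f}=\divergence_{\Sigma,f}\circ D$ is self-adjoint; adding the smooth potential $Q$ of \eqref{eq4.23} leaves $-\Delta_{\Sigma,f}+Q$ self-adjoint. Thus \eqref{eq4.26} is the eigenvalue problem for a self-adjoint elliptic operator on a closed manifold, whose spectrum is real, discrete and bounded below, and whose principal eigenvalue is characterized variationally by
\[
\mu_1=\inf\Bigl\{\,\int_\Sigma\bigl(|D\psi|^2+Q\psi^2\bigr)e^{-f}\,dV_\Sigma \ :\ \psi\in C^\infty(\Sigma),\ \int_\Sigma\psi^2e^{-f}\,dV_\Sigma=1\,\Bigr\}.
\]
The identification of this quadratic form with \eqref{eq4.26} rests on the integration by parts $\int_\Sigma\psi(-\Delta_{\Sigma,f}\psi)e^{-f}\,dV_\Sigma=\int_\Sigma|D\psi|^2e^{-f}\,dV_\Sigma$, which follows from $\divergence_\Sigma(e^{-f}D\psi)=e^{-f}\Delta_{\Sigma,f}\psi$ together with the divergence theorem on the closed surface $\Sigma$.

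With this in hand the lemma is immediate: inequality \eqref{eq4.25} says precisely that the numerator above is $\ge 0$ for every $\psi\in C^\infty(\Sigma)$, hence $\mu_1\ge 0$. For completeness I would recall the provenance of \eqref{eq4.25}: stability of the marginally $f$-trapped surface $\Sigma$ (implied by its being outermost) furnishes a strictly positive principal eigenfunction $\varphi_0\in C^\infty(\Sigma)$ of the stability operator $L^f$ of \eqref{eq4.18}, with $L^f\varphi_0=\lambda_1\varphi_0$ and $\lambda_1\ge 0$, so in particular $0\le L^f\varphi_0$; writing $\varphi_0=e^{u}$ in \eqref{eq4.20}, dividing by $\varphi_0>0$, multiplying by an arbitrary $\psi^2$, completing the square as in the chain \eqref{eq4.24}, and integrating against $e^{-f}\,dV_\Sigma$ makes the term $\divergence_{\Sigma,f}\!\bigl(\psi^2(W_\Sigma-Du)\bigr)$ drop out, since $\int_\Sigma\divergence_{\Sigma,f}(X)e^{-f}\,dV_\Sigma=\int_\Sigma\divergence_\Sigma(e^{-f}X)\,dV_\Sigma=0$; what survives is exactly \eqref{eq4.25}.

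There is no genuine obstacle here: the substantive estimate has been absorbed into the earlier derivation of \eqref{eq4.25}, and what remains is the textbook variational description of the principal eigenvalue of a self-adjoint elliptic operator. The only points deserving a sentence of care are that the stability hypothesis must be invoked in the form ``$L^f$ has a strictly positive principal eigenfunction'', so that $u=\log\varphi_0$ is globally defined and the completion-of-squares step is legitimate, and that the weighted measure $e^{-f}\,dV_\Sigma$ is precisely the one annihilating the $\divergence_{\Sigma,f}$ term upon integration; equivalently one could argue $\mu_1\ge\lambda_1(L^f)\ge 0$ directly, but routing through the Rayleigh quotient is cleanest and matches the remark preceding the lemma.
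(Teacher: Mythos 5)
Your proposal is correct and follows essentially the same route as the paper: inequality \eqref{eq4.25} is exactly the statement that the Rayleigh quotient of the weighted, self-adjoint problem \eqref{eq4.26} is nonnegative, whence $\mu_1\ge 0$. You merely spell out the details the paper leaves implicit (self-adjointness of $-\Delta_{\Sigma,f}+Q$ with respect to $e^{-f}\,dV_\Sigma$, the integration by parts, and the correct reading of the stability hypothesis \eqref{eq4.18} as the existence of a positive $\varphi$ with $L^f\varphi\ge 0$), all of which are accurate.
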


We can now prove our main results. The proofs follow closely the proof in \cite{GS}, but for the reader's convenience we supply details.

\begin{proof}[Proof of Theorem \ref{theorem1.2}]
Now set $n := \dim \Sigma\ge 2$. Let $\psi_1>0$ be an eigenfunction for the principal eigenvalue $\mu_1$ of \eqref{eq4.26}. By standard theory (see e.g. \cite[Thm. A.10]{Lee}), we can choose $\psi_1 > 0$. The conformal transformation
\begin{equation}
\label{eq4.28}
{\tilde h} = \psi_1^{2p}h,
\end{equation}
applied to the metric $h$ on $\Sigma$, produces a metric ${\tilde h}$ whose scalar curvature is
\begin{equation}
\label{eq4.29}
R_{\Sigma}(\tilde h) = \psi_1^{-2p}\left \{ R_{\Sigma}(h)-2(n-1)p\frac{\Delta_h\psi_1}{\psi_1}  + (n-1)p\left [ 2-(n-2)p\right ] \frac{\left \vert D \psi_1\right \vert^2_h}{\psi_1^2}\right \}.
\end{equation}
Here we have replaced the notation $\Delta_{\Sigma}$ by $\Delta_h$ to distinguish the Laplacian on $\Sigma$ defined by $h$ from that defined by ${\tilde h}$. Using its conformal transformation property, we can write the scalar Laplacian $\Delta_{\tilde h}$ associated to ${\tilde h}$ in terms of $\Delta_h$, yielding
\begin{equation}
\label{eq4.30}
\Delta_{\tilde h} f = \psi_1^{-2p}\left [ \Delta_h f +(n-2)p\frac{h(D\psi_1,Df)}{\psi_1}\right ].
\end{equation}
Finally, we replace $f$ by
\begin{equation}
\label{eq4.31}
{\tilde f}:=f +k\log \psi_1,
\end{equation}
so that
\begin{equation}
\label{eq4.32}
\begin{split}
\Delta_{\tilde h} {\tilde f} =&\, \Delta_{\tilde h} f +k\left \{ \frac{\Delta_{\tilde h}\psi_1}{\psi_1} -\frac{\left \vert \tilde{D}\psi_1\right \vert_{\tilde h}^2}{\psi_1^2}\right \} \\
=&\, \psi_1^{-2p}\left \{ \Delta_h f +(n-2)p\frac{h(D\psi_1,Df)}{\psi_1}+k \frac{\Delta_h\psi_1}{\psi_1}\right .\\
&\, \left . +k\left [ (n-2)p-1\right ] \frac{\left \vert D\psi_1\right \vert_h^2}{\psi_1^2}\right \}
\end{split}
\end{equation}

Putting things together, we obtain
\begin{equation}
\label{eq4.33}
\begin{split}
R_{\Sigma}^{{\tilde f},m}({\tilde h}) := &\, R_{\Sigma}({\tilde h}) +2\Delta_{\tilde h}{\tilde f} -\frac{(m+1)}{m}\left \vert \tilde{D}{\tilde f}\right \vert_{\tilde h}^2\\
= &\, \psi_1^{-2p}\left \{ R_{\Sigma}(h)-2(n-1)p\frac{\Delta_h\psi_1}{\psi_1} + (n-1)p\left [ 2-(n-2)p\right ] \frac{\left \vert D \psi_1\right \vert_h^2}{\psi_1^2} \right . \\
&\, \left . + 2\Delta_h f +2(n-2)p\frac{h(D\psi_1,Df)}{\psi_1}+2k \frac{\Delta_h\psi_1}{\psi_1}\right . \\
&\, \left . +2k\left [ (n-2)p-1\right ] \frac{\left \vert D\psi_1\right \vert_h^2}{\psi_1^2} \right .\\
&\, \left . -\frac{(m+1)}{m} \left [ |Df|_h^2 +2k\frac{h(Df,D\psi_1)}{\psi_1} +k^2 \frac{\left \vert D\psi_1 \right\vert_h^2}{\psi_1^2}\right ] \right \} .
\end{split}
\end{equation}

Now simultaneously solve for $p$ and $k$
\begin{equation}
\label{eq4.34}
\begin{split}
1=&\, (n-1)p-k,\\
1=&\, (n-2)p-\frac{(m+1)}{m}k
\end{split}
\end{equation}
to obtain a coefficient of $-2$ for the $\frac{\Delta_h\psi_1}{\psi_1}$ term and $+2$ for $\frac{h(Df,D\psi_1)}{\psi_1}$ term. We obtain
\begin{equation}
\label{eq4.35}
\begin{split}
p=&\, \frac{1}{n+m-1},\\
k=&\,-\frac{m}{n+m-1}.
\end{split}
\end{equation}
Then those two terms combine to give $-\frac{2}{\psi_1}\Delta_{h,f}\psi_1$.
Therefore
\begin{equation}
\label{eq4.36}
\begin{split}
R_{\Sigma}^{{\tilde f},m}({\tilde h}) = &\, \psi_1^{-2p}\left \{ R_{\Sigma}(h) + 2\Delta_h f -\frac{(m+1)}{m} |Df|_h^2-2\frac{\Delta_{h,f}\psi_1}{\psi_1}\right . \\
&\, \left . + \left ( \frac{n+m}{n+m-1}\right ) \frac{\left \vert D \psi_1\right \vert_h^2}{\psi_1^2} \right \}\\
=&\, \psi_1^{-2p}\left \{ R_{\Sigma}^{f,m}(h)-2\frac{\Delta_{h,f}\psi_1}{\psi_1}+\left ( \frac{n+m}{n+m-1}\right ) \frac{\left \vert D \psi_1\right \vert_h^2}{\psi_1^2} \right \}\\
=&\, \psi_1^{-2p}\left \{ R_{\Sigma}^{f,m}(h)+2\mu_1-2Q+\left ( \frac{n+m}{n+m-1}\right ) \frac{\left \vert D \psi_1\right \vert_h^2}{\psi_1^2} \right \}\\
\ge &\, \psi_1^{-2p}\left \{ 2\mu_1+\left ( \frac{n+m}{n+m-1}\right ) \frac{\left \vert D \psi_1\right \vert_h^2}{\psi_1^2} \right \}
\\
\geq & 0.
\end{split}
\end{equation}
We used \eqref{eq4.26} to get the third equality. The first inequality follows from \eqref{eq4.23}. The second inequality follows since $\mu_1 \geq 0$ by Lemma \ref{lemma4.1}.
\end{proof}

\section{Warped products}\label{section5}
\setcounter{equation}{0}

\subsection{The proof of Theorem \ref{theorem1.9}}
Consider the warped product metric
\begin{equation}
\label{eq5.1}
\gamma= g\oplus e^{-2f/m}h,
\end{equation}
on the manifold $N=M\times S$, where $g$ is a Lorentzian metric on an $n$-dimensional spacetime $M$, $h$ is a Riemannian metric on the compact $m$-manifold $S$, and $f:M\to{\mathbb R}$. More precisely, let $\iota:M\to N$ and $\zeta:S\to N$ be inclusions such that $p=\iota(q)=\zeta(r)\in N$. Then the decomposition $T_pN=\iota_*T_q M\oplus \zeta_* T_rS$ is orthogonal, with $\iota_*T_q M$ in the kernel of $h$ and $\zeta_*T_rS$ in the kernel of $g$. We use $\oplus$ to emphasize the direct sum (sometimes called the direct product) where it seems helpful to do so, but we will usually suppress inclusions (and their pullbacks/pushforwards) to lessen the burden of notation. Note that $(M,g)$ is totally geodesic in $(N,\gamma)$, while $(S,h)$ is umbilic.

Warped products occur in the physics of Kaluza-Klein type compactifications, where the weighting $f$ represents a dilaton (perhaps after field redefinition). In the warped product formulation, the electromagnetic field would appear in the stress-energy tensor and be subject to any assumed energy conditions. (In a full implementation of the Kaluza-Klein idea, electromagnetism would be incorporated by twisting $\gamma$, which would yield equations for the electromagnetic field, making energy conditions redundant.)

A standard calculation yields
\begin{equation}
\label{eq5.2}
\begin{split}
\ric(N,\gamma) =&\, \ric^{f,m}(M,g) \oplus \left \{ \ric(S,h)+\frac{e^{-2f/m}}{m}\left [ \square_g f -\vert \nabla f \vert_g^2\right ] h \right \},\\
=&\, \ric^{f,m}(M,g) \oplus \left \{ \ric(S,h)+\frac{e^{-2f/m}}{m}\left ( \square_{g,f} f \right ) h \right \},\\
R(N,\gamma) = &\, R^{f,m}(M,g)+e^{2f/m}R(S,h),
\end{split}
\end{equation}
where $\ric(S,h)$ and $R(S,h)$ are taken to be zero if $\dim S =1$, and $\square_{g,f} f$ denotes the \emph{drift Laplacian} defined by
\begin{equation}
\label{eq5.3}
\square_{g,f} u=\square_g u -g(\nabla f,\nabla u).
\end{equation}

Recalling Theorem \ref{theorem1.2}, we make the following observation.
\begin{remark}\label{remark5.1}
If the $m$-Bakry--\'Emery scalar curvature $R^{f,m}(M,g)$ of the base $(M,g)$ and the scalar curvature $R(S,h)$ of the fibre are nonnegative, then the scalar curvature $R(N,\gamma)$ of the total space is nonnegative. If one of the former is positive and the other is nonnegative, then the scalar curvature $R(N,\gamma)$ of the total space is positive.
\end{remark}

\begin{proof}[Proof of Theorem \ref{theorem1.9}]
We prove the theorem for $\dim\Sigma = 3$. The two-dimensional case is similar.
The reverse implication is trivial by simply taking $f = const$. Conversely, suppose $\Sigma$ does not admit a metric of positive scalar curvature. By the classification of oriented prime three-manifolds, $\Sigma$ is diffeomorphic to either a quotient of $\mathbb{S}^3$, $\mathbb{S}^2\times \mathbb{S}^1$, or an aspherical space, and the first two are ruled out. Therefore $\Sigma$ is aspherical. Hence $\Sigma \times {\mathbb S}^1$ is also aspherical, and thus it also does not carry a metric of positive scalar curvature \cite[Theorem 2]{CL} (see also \cite{Gromov}). Therefore, by the second line of equation \eqref{eq5.2}, this proves the theorem for $m = 1$. For $0 < m < 1$, we have $0 < R^{f,m}(g)=R^{f,1}(g)+\left ( 1-\frac{1}{m}\right ) |df|^2 \le R^{f,1}(g)$, which proves the theorem for these values of $m$. Similar reasoning applies for $1 < m \leq 2$.
\end{proof}

\begin{proof}[Proof of Remark \ref{remark1.10}]\:
\begin{itemize}
\item [1.] See Claim \ref{claim5.2} below.
\item [2.] If $R^{f,m}>0$ on $\Sigma$ for some $m>0$ (possibly not an integer), then $R^{f,k}>0$ for any integer $k\ge m$, and then by the second line of \eqref{eq5.2} we have a metric of positive scalar curvature on $\Sigma\times {\mathbb T}^k$. But if $\Sigma$ is aspherical then so is $\Sigma\times {\mathbb T}^k$, so we would have a positive scalar curvature metric on an aspherical manifold, falsifying the conjecture.
\item[3.] In this nonorientable case, one uses the Projective Plane Theorem instead of the Sphere Theorem to deduce that $\Sigma$ is aspherical but one needs the additional hypothesis that $\Sigma$ does not contain any two-sided $\mathbb{RP}^2$ to obtain the conclusion in the remark. Although not needed for the proof, we note that
there are closed prime manifolds with a two-sided $\mathbb{RP}^2$ \cite{Negami} other than $\mathbb{S}^1 \times \mathbb{RP}^2$ (which admits a metric with positive scalar curvature). Moreover, it follows from
Proposition 2.2 in \cite{Swarup} and Theorem 5.1 in \cite{Li-Zhang} that there are many prime 3-manifolds with a two-sided $\mathbb{RP}^2$ which do not admit metrics of positive scalar curvature and yet are not aspherical.
\end{itemize}
\end{proof}

An interesting question is whether there are manifolds that admit positive $m$-Bakry--\'Emery scalar curvature for some $m>0$ but do not admit positive scalar curvature. With that in mind, we make the following observation.

\begin{claim}\label{claim5.2} \:
\begin{itemize}
\item [(i)] For $\Sigma$ a closed oriented manifold, if there is a function $f$ and metric $g$ such that $R^{f,m}(g)>0$ for some $m\in (0,1]$, then $\Sigma$ admits a metric of positive scalar curvature.
\item [(ii)] If $\dim \Sigma=3$ or $\dim \Sigma=5$, $\Sigma$ will admit a metric of positive scalar curvature if there is a function $f$ and metric $g$ such that $R^{f,m}(g)>0$ for some $m\in (0,2]$.
\end{itemize}
\end{claim}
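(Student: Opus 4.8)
The plan is to turn $R^{f,m}$ into honest scalar curvature on a higher-dimensional space via the warped product identity \eqref{eq5.2}, and then descend. First I would reduce to an integer value of $m$. Since $R^{f,k}(g)=R^{f,m}(g)+\left(\tfrac1m-\tfrac1k\right)|df|^2_g$ for any $k$, in part (i) the hypothesis $m\in(0,1]$ gives $R^{f,1}(g)=R^{f,m}(g)+\left(\tfrac1m-1\right)|df|^2_g\ge R^{f,m}(g)>0$, while in part (ii) the case $m\le 1$ is covered by (i) and for $1<m\le 2$ one has $R^{f,2}(g)=R^{f,m}(g)+\left(\tfrac1m-\tfrac12\right)|df|^2_g\ge R^{f,m}(g)>0$; so it suffices to handle $m=1$ and $m=2$. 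Next, taking the fibre $S$ in \eqref{eq5.2} to be a flat torus $\mathbb T^k$ (so $R(S,h)=0$), the warped metric $\gamma=g\oplus e^{-2f/k}h$ on $\Sigma\times\mathbb T^k$ has scalar curvature $R(\Sigma\times\mathbb T^k,\gamma)=R^{f,k}(g)>0$. Hence $\Sigma\times\mathbb S^1$ carries a metric of positive scalar curvature in the $m\le 1$ situation, and $\Sigma\times\mathbb T^2$ does in the $1<m\le 2$ situation, where $\dim(\Sigma\times\mathbb T^2)=\dim\Sigma+2\in\{5,7\}$ in part (ii).

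The crux is to descend from the torus factor back to $\Sigma$. Writing $\Sigma\times\mathbb T^2=(\Sigma\times\mathbb S^1)\times\mathbb S^1$, the $\mathbb S^1$-stability property --- ``$X\times\mathbb S^1$ admits positive scalar curvature only if $X$ does'' --- supplied in the relevant low dimensions by \cite{RX} lets me strip off the circle factors one at a time, $\Sigma\times\mathbb T^2\leadsto\Sigma\times\mathbb S^1\leadsto\Sigma$; part (i) needs one such step (in dimension $\dim\Sigma+1$) and part (ii) needs two (in dimensions $\le 7$, which is why $\dim\Sigma\in\{3,5\}$ is enough). Alternatively, for $\dim\Sigma=3$ one can argue exactly as in the proof of Theorem \ref{theorem1.9}: if $\Sigma$ admitted no positive scalar curvature metric then, after reducing to the prime pieces, it would be aspherical, hence $\Sigma\times\mathbb T^2$ would be an aspherical $5$-manifold, contradicting \cite{CL}.

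I expect this descent to be the only genuine obstacle: the implication ``$\Sigma\times\mathbb S^1$ PSC $\Rightarrow\Sigma$ PSC'' is precisely the $\mathbb S^1$-stability conjecture, which is what forces the dimension restriction in part (ii) and the appeal to \cite{RX}, and it is also where the orientability hypothesis is used. The reduction in $m$ and the warped product computation are routine.

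Finally I would record a shorter argument that, modulo checking the sign conventions, appears to give all $m\in(0,\infty]$ with no restriction on $\dim\Sigma$: since $R^{f,m}(g)\le R^{f,\infty}(g)=R_g+2\Delta_g f-|df|^2_g$ for $m>0$, the substitution $v:=e^{-f/2}>0$ turns $R^{f,\infty}(g)>0$ into $-4\Delta_g v+R_g v>0$, so $v$ is a positive supersolution of $-4\Delta_g+R_g$ on the closed manifold $\Sigma$, whence the bottom eigenvalue of $-4\Delta_g+R_g$ (Perelman's $\lambda$-invariant) is positive; since $\tfrac{4(\dim\Sigma-1)}{\dim\Sigma-2}>4$ for $\dim\Sigma\ge3$, the conformal Laplacian of $g$ then also has positive bottom eigenvalue, so the Yamabe invariant of the conformal class $[g]$ is positive and $\Sigma$ admits a positive scalar curvature metric (the case $\dim\Sigma=2$ being immediate from Gauss--Bonnet). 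I would present both routes and let the authors decide which to include.
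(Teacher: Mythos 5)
Your primary route is essentially the paper's: reduce to integer $m\in\{1,2\}$ via $R^{f,k}\ge R^{f,m}$ for $k\ge m$, build the warped product over a flat torus using \eqref{eq5.2}, and descend. One imprecision to fix in the descent: the result of \cite{RX} is \emph{conditional} (it requires the circle directions to lie in certain cones about the normal to the $\Sigma$-factor), and that condition is verified here only because the metric on $\Sigma\times\mathbb{T}^k$ is a warped product. After the first application of \cite{RX} the resulting positive scalar curvature metric on $\Sigma\times\mathbb S^1$ need not be warped, so you cannot ``strip off the circle factors one at a time'' by \cite{RX} alone. The second descent in part (ii) must instead invoke the \emph{unconditional} low-dimensional $\mathbb S^1$-stability results --- \cite[Corollary 7.34]{Chodosh} when $\dim\Sigma=3$, and \cite[Remark 2.26]{Rade} in dimensions up to $7$ when $\dim\Sigma=5$ --- which is precisely where the dimension restriction in (ii) comes from. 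Your alternative aspherical argument for $\dim\Sigma=3$ also needs care: Claim \ref{claim5.2} does not assume $\Sigma$ prime, and a nontrivial connected sum of aspherical pieces is not aspherical, so that route does not directly apply as stated.

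The genuinely interesting part of your proposal is the ``shorter argument,'' which I believe is correct and strictly stronger than the claim as stated. The computation checks out: $R^{f,m}\le R^{f,\infty}=R_g+2\Delta_g f-|df|_g^2$ for all $m\in(0,\infty]$, and with $v=e^{-f/2}>0$ one has $-4\Delta_g v+R_g v=v\,R^{f,\infty}>0$ pointwise, so pairing against the positive first eigenfunction of $-4\Delta_g+R_g$ on the closed manifold $\Sigma$ gives $\lambda_1(-4\Delta_g+R_g)>0$; since $\tfrac{4(n-1)}{n-2}>4$ for $n=\dim\Sigma\ge 3$, the Rayleigh quotient comparison gives positivity of the first eigenvalue of the conformal Laplacian, hence a conformal metric of positive scalar curvature (and $\dim\Sigma=2$ follows from Gauss--Bonnet since $\int_\Sigma R^{f,m}\,dA\le 4\pi\chi(\Sigma)$ for $m>0$). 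This is the standard fact that positivity of Perelman's $\lambda$-invariant forces a positive Yamabe constant, applied to the observation that pointwise positive $R^{f,\infty}$ exhibits $e^{-f/2}$ as a strict positive supersolution. If correct --- and I see no gap --- it proves the conclusion of Claim \ref{claim5.2} for \emph{all} $m\in(0,\infty]$, in all dimensions, without orientability, and without \cite{RX}, \cite{CL}, or \cite{Rade}; it would also make item 2 of Remark \ref{remark1.10} unconditional. You should lead with this argument rather than relegating it to a closing remark, and the warped-product route then survives as an independent topological cross-check.
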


As above, since $R^{f,m}>0$ implies that $R^{f,k}>0$ for any $k\ge m$, without loss of generality we round $m$ up to the nearest integer, allowing us to consider the warped product of $\Sigma$ with an $m$-torus. Then to justify part (i) of the claim, we need only consider the case of $m=1$. Recently Rosenberg and Xu \cite{RX} have announced a proof of the ${\mathbb S}^1$-stability conjecture under certain assumptions. Their work applies whenever the tangent vectors to the ${\mathbb S}^1$ factor in the topological product $\Sigma\times {\mathbb S}^1$ lie within the cones (defined by a positive scalar curvature metric) whose axis at each point is the normal vector (with orientation chosen to match that defined by the ${\mathbb S}^1$ factor) to the $\Sigma$ factor. When the positive scalar curvature metric on $\Sigma\times {\mathbb S}^1$ is a warped product, this condition is trivially met since then these two vector fields are proportional and then $\Sigma$ would admit a positive scalar curvature metric.

To justify part (ii) of the claim, when $m\in (1,2]$ we can round up to $m=2$. If $\dim \Sigma=3$, we obtain from equations \eqref{eq5.1} and \eqref{eq5.2} a positive scalar curvature warped product metric on the $5$-manifold $\Sigma\times {\mathbb T}^2$. Applying the result of Rosenberg and Xu \cite{RX} to this warped product, we obtain a positive scalar curvature metric (which might not be a warped product) on the $4$-manifold $\Sigma\times {\mathbb S}^1$. But the ${\mathbb S}^1$ conjecture is true for $\Sigma$ a $3$-manifold \cite[Corollary 7.34]{Chodosh}, so there must be a positive scalar curvature metric on $\Sigma$. If $\dim \Sigma=5$, the proof is even simpler since it is known that if a closed $7$-manifold of the form $\Sigma^6\times {\mathbb S}^1$ admits a positive scalar curvature metric (whether in the form of a warped product or not) then so does $\Sigma^6$, and if $\Sigma^6\simeq \Sigma^5\times{\mathbb S}^1$ admits a positive scalar curvature metric then so does $\Sigma^5$ \cite[Remark 2.26]{Rade}.

\begin{remark}\label{remark5.3}
We see from this that the result of \cite{RX} would imply another proof of part of Theorem \ref{theorem1.9}.
\end{remark}

\subsection{Interpretation of the dominant energy condition and trapping}
Continuing from \eqref{eq5.2}, we now compute that the Einstein tensor $G(N,\gamma)$ of the total space becomes
\begin{equation}
\label{eq5.4}
\begin{split}
G(N,\gamma):=&\, \ric(N,\gamma) -\frac12 R(N,\gamma)\gamma\\
= &\, \left \{ G^{f,m}(M,g) -\frac12 e^{2f/m}R(S,h)g\right \}\\
&\, \oplus \left \{ G(S,h)+e^{-2f/m}\left [ \frac{1}{m}\left ( \square_gf-|\nabla f|_g^2\right )-\frac12 R^{f,m}(M,g)\right ] h \right \}.
\end{split}
\end{equation}

\begin{lemma}[Warped product interpretation of DEC]\label{lemma5.4}
Let $U,V\in \iota_*T_q M$ be future-causal and $m$ is a positive integer. Let $(N,\gamma)$ be a warped product spacetime with $\gamma$ defined in \eqref{eq5.1}.
\begin{itemize}
\item [(i)] If $ G^{f,m}(M,g)(U,V)\ge 0$ and $R(S,h)\ge 0$, then $G(N,\gamma)(U,V)\ge 0$.
\item [(ii)] If $ G(N,\gamma)(U,V)\ge 0$ and $R(S,h)\le 0$, then $G^{f,m}(M,g)(U,V)\ge 0$.
\end{itemize}
\end{lemma}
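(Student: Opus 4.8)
The plan is to read off both statements directly from the decomposition \eqref{eq5.4} of the Einstein tensor of the warped product. Since $U, V \in \iota_* T_q M$, any contraction $G(N,\gamma)(U,V)$ picks up only the first (base) block in \eqref{eq5.4}, namely
\[
G(N,\gamma)(U,V) = G^{f,m}(M,g)(U,V) - \tfrac12 e^{2f/m} R(S,h)\, g(U,V).
\]
So everything reduces to controlling the sign of the term $-\tfrac12 e^{2f/m} R(S,h)\, g(U,V)$.

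First I would recall that $U$ and $V$ are future-causal vectors for the Lorentzian metric $g$, and that for two future-causal vectors one has $g(U,V) \le 0$ (with the convention that timelike vectors have negative norm, which is the convention in force here since $g(\ell,k) = -1$ and $\ell, k$ are future null). This is the one genuinely ``geometric'' input; it is a standard fact about the causal structure of a Lorentzian vector space (the future causal cone is ``self-dual'' up to sign), so I would simply cite it or include the one-line argument (write $V$ in a null frame adapted to $U$, or use the reverse Cauchy--Schwarz inequality). Consequently $-\tfrac12 e^{2f/m} R(S,h)\, g(U,V)$ has the same sign as $R(S,h)$ (the exponential factor being positive, and $R(S,h)$ a constant on the base direction since it is evaluated on the fibre).

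With that in hand both parts are immediate. For (i): if $G^{f,m}(M,g)(U,V) \ge 0$ and $R(S,h) \ge 0$, then since $-g(U,V) \ge 0$ the extra term $-\tfrac12 e^{2f/m} R(S,h) g(U,V) \ge 0$, so the sum $G(N,\gamma)(U,V) \ge 0$. For (ii): if $G(N,\gamma)(U,V) \ge 0$ and $R(S,h) \le 0$, then $-\tfrac12 e^{2f/m} R(S,h) g(U,V) \le 0$ (product of the nonpositive $R(S,h)$, the positive exponential, and the nonpositive $g(U,V)$, times $-\tfrac12$), so rearranging $G^{f,m}(M,g)(U,V) = G(N,\gamma)(U,V) + \tfrac12 e^{2f/m} R(S,h) g(U,V) \ge 0$. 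One should note that $R(S,h)$ is interpreted as $0$ when $\dim S = 1$ (consistent with the convention already adopted after \eqref{eq5.2}), in which case both implications are trivial equalities; the hypothesis that $m$ is a positive integer is what makes $\gamma$ in \eqref{eq5.1} an honest Riemannian-times-Lorentzian warped product so that \eqref{eq5.2} and \eqref{eq5.4} apply.

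I do not expect any real obstacle here — the content is entirely in the already-established formula \eqref{eq5.4}. The only point requiring a word of care is the sign convention for $g(U,V)$ on future-causal vectors; if the paper's signature convention made this $\ge 0$ instead, the roles of ``$R(S,h) \ge 0$'' and ``$R(S,h) \le 0$'' in (i) and (ii) would swap, so I would double-check consistency with the conventions fixed in Section~\ref{section4} (where $w$ is future timelike and $\ell = w+\nu$, $k = w - \nu$ with $g(\ell,k) = -1$, forcing the ``mostly plus'' signature and hence $g(U,V) \le 0$ for future-causal $U,V$). Everything else is bookkeeping.
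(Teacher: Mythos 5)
Your proof is correct and is exactly the paper's argument — the paper's proof is literally ``Immediate from \eqref{eq5.4}'', and you have simply filled in the details (restriction of the block decomposition to base vectors, plus the sign fact $g(U,V)\le 0$ for future-causal $U,V$ in the mostly-plus signature), including the right handling of both implications.
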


\begin{proof}
Immediate from \eqref{eq5.4}.
\end{proof}
Then for $m$ a positive integer, we can view the $m$-Bakry-\'Emery dominant energy condition (DEC) as a weak form of the dominant energy condition in a warped product manifold which is only applied to null vectors orthogonal to the fibres, provided the scalar curvature of the fibres has a sign.

To complete the picture, we turn our attention to null mean curvatures. It's convenient to use (abstract) index notation, with greek letters (e.g., $\mu$) denoting tensors on $N$, middle latin letters (e.g., $i$) denoting tensors on $M$ (pushed forward to $N$), and early latin letters (e.g., $a$) denoting tensors on $S$. Let $\ell$ be a null vector in $T_pN$. Then its null mean curvature is easily computed from $\gamma$. In this notation, $\nabla_{\mu}$ denotes the Levi-Civita connection of $(N,\gamma)$, $\nabla_i$ denotes the Levi-Civita connection of $(M,g)$, and $\nabla_a$ denotes the Levi-Civita connection of $(S,h)$. Then for $\ell$ an element of $T_p^*N$, we have
\begin{equation}
\label{eq5.5}
\left [ \nabla_{\mu}\ell_{\nu}\right ]=\left [ \begin{array}{cc} \nabla_i\ell_j & \nabla_i \ell_a +\frac{1}{m} \ell_a \nabla_i f\\ & \\
\nabla_a \ell_i+\frac{1}{m} \ell_a \nabla_i f & \nabla_a\ell_b -\frac{1}{m}e^{-2f/m}\left ( \nabla_{\ell} f\right ) h_{ab}\end{array} \right ].
\end{equation}
If $\ell$ is (metric-dual to) a null vector that lies in $i_* T_qN$, then $\ell_a=0$ and the null mean curvature $\theta^{(N,\gamma)}$ of $\ell$ in $N$ is
\begin{equation}
\label{eq5.6}
\theta^{(N,\gamma)}=\theta^{(M,g)}-g(\ell,\nabla f) = \theta^{(M,g)}-\nabla_{\ell}f = \theta_f,
\end{equation}
where $\theta^{(M,g)}$ is the null mean curvature of $\ell$ in $M$ and $\nabla$ can be taken to be the Levi-Civita connection on $(M,g)$. We have the following interpretation.

\begin{lemma}\label{lemma5.5}
Let $\ell$ be future null and normal to a codimension-$2$ spacelike hypersurface in the warped product $(N,\gamma)$, where $\gamma$ is defined in \eqref{eq5.1}. Then $\Sigma$ is future trapped (marginally trapped) if and only if $\theta_f<0$ (respectively, $\theta_g=0$). As well, we have the following.
\begin{itemize}
\item [(i)] If $\Sigma$ is trapped in $(M,g)$ and $\nabla_{\ell}f\ge 0$ then $\Sigma$ is future trapped in $(N,\gamma)$.
\item [(ii)] If $\Sigma$ is trapped in $(N,\gamma)$ and $\nabla_{\ell}f\le 0$ then $\Sigma$ is future trapped in $(M,g)$.
\item [(iii)] ``Trapped'' can be replaced with ``marginally trapped'' in these statements if also the inequality on $\nabla_{\ell} f$ is replaced by $\nabla_{\ell} f=0$.
\end{itemize}
\end{lemma}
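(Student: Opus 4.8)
The plan is to reduce the whole lemma to the scalar identity \eqref{eq5.6}, $\theta^{(N,\gamma)}=\theta_f$, which itself is read off from the block formula \eqref{eq5.5}. First I would make explicit the setup implicit in the statement: the relevant codimension-$2$ surface has the form $\Sigma\times S\subset N$ with $\Sigma$ spacelike and codimension-$2$ in the base $M$, so that each of the two future null normals $\ell^{\pm}$ of $\Sigma\times S$ lies in $\iota_*T_qM$, i.e.\ has vanishing $S$-components $\ell^{\pm}_{a}=0$ in the notation of \eqref{eq5.5}; this is consistent with the $\gamma$-orthogonality of $T_pN=\iota_*T_qM\oplus\zeta_*T_rS$, and it is precisely the hypothesis under which \eqref{eq5.6} was derived. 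I would then recall that derivation: with $\ell_a=0$ the two off-diagonal blocks of \eqref{eq5.5} vanish, the $M$-block $\nabla_i\ell_j$ contributes the unweighted base expansion $\theta^{(M,g)}$ to the trace of $\nabla_\mu\ell_\nu$ over $T(\Sigma\times S)$, and the fibre block reduces to $-\tfrac{1}{m}e^{-2f/m}(\nabla_\ell f)h_{ab}$. The one point that must be handled with care is that this fibre block has to be traced against the \emph{induced} metric on the $S$-factor, namely $e^{-2f/m}h$, whose inverse is $e^{2f/m}h^{ab}$; the warp factor then cancels exactly and the fibre directions contribute $-\tfrac{1}{m}(\nabla_\ell f)\,h^{ab}h_{ab}=-\nabla_\ell f$, so that $\theta^{(N,\gamma)}=\theta^{(M,g)}-\nabla_\ell f=\theta_f$.

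Granting \eqref{eq5.6}, the first sentence of the lemma is immediate: applied to $\ell^{+}$ and $\ell^{-}$ it identifies the total-space expansions $\theta^{(N,\gamma)}(\ell^{\pm})$ with the Bakry--\'Emery expansions $\theta_f(\ell^{\pm})$ of $\Sigma$ in $(M,g)$, so $\Sigma\times S$ is future trapped (respectively marginally trapped) in $(N,\gamma)$ exactly when $\theta_f<0$ (respectively $\theta_f=0$) in $(M,g)$, which unwinds the relevant definitions recalled in Section~\ref{section: intro}.

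Finally, for (i)--(iii) I would write $\theta_g:=\theta^{(M,g)}$ for the unweighted null mean curvature of $\Sigma$ in the base, so that \eqref{eq5.6} reads $\theta^{(N,\gamma)}=\theta_g-\nabla_\ell f$. Then (i) is the observation that $\theta_g<0$ together with $\nabla_\ell f\ge 0$ forces $\theta^{(N,\gamma)}=\theta_g-\nabla_\ell f<0$; (ii) is the reverse bookkeeping $\theta_g=\theta^{(N,\gamma)}+\nabla_\ell f\le\theta^{(N,\gamma)}<0$ when $\nabla_\ell f\le 0$; and (iii) is these two one-line arguments with each strict inequality replaced by an equality and each sign condition on $\nabla_\ell f$ replaced by $\nabla_\ell f=0$. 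Since all the substance sits in \eqref{eq5.6}, I do not expect a genuine obstacle; the only subtlety is the cancellation of the warp factor in the fibre trace noted above.
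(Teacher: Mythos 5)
Your argument is correct and is exactly the paper's: the proof given there is simply ``Immediate from \eqref{eq5.6},'' with the identity $\theta^{(N,\gamma)}=\theta^{(M,g)}-\nabla_\ell f=\theta_f$ established in the surrounding text from the block formula \eqref{eq5.5}, including the cancellation of the warp factor in the fibre trace that you flag. Your only addition is spelling out that derivation and the sign bookkeeping for (i)--(iii), which matches the intended reading (note the $\theta_g=0$ in the statement should be $\theta_f=0$, as your second paragraph correctly assumes).
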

\begin{proof}
Immediate from \eqref{eq5.6}.
\end{proof}

\appendix
\section{Flowing the Bakry--\'Emery scalar curvature}\label{appendixA}
\setcounter{equation}{0}

\noindent Consider a smooth one-parameter family of metrics $g_t$ and functions $f_t$, $t\ge 0$, such that $g=g_0$ and $f=f_0$. The question then is to compute the variation with $t$ of the $m$-Bakry--\'Emery scalar curvature.  Straightforward calculations show that
\begin{equation}
\label{eqA.1}
\frac{\partial R}{\partial t}= -R^{ij}\frac{\partial g_{ij}}{\partial t} +\nabla^i\nabla^j \frac{\partial g_{ij}}{\partial t}-\Delta\left ( g^{ij}\frac{\partial g_{ij}}{\partial t}\right ) ,
\end{equation}
\begin{equation}
\label{eqA.2}
\begin{split}
\frac{\partial}{\partial t} (2\Delta f) =&\, -2\frac{\partial g_{ij}}{\partial t}\nabla^i\nabla^j f- \left [ \nabla^i f \cdot \nabla^j \frac{\partial g_{ij}}{\partial t} + \nabla^j f \cdot \nabla^i \frac{\partial g_{ij}}{\partial t} \right . \\
&\, \left . -g^{ij}\nabla_{\nabla f} \frac{\partial g_{ij}}{\partial t}\right ] +2\Delta\frac{\partial f}{\partial t},
\end{split}
\end{equation}
and
\begin{equation}
\label{eqA.3}
\frac{\partial}{\partial t} \left [ -\frac{(m+1)}{m}|df|^2\right ] = \frac{(m+1)}{m}\frac{\partial g_{ij}}{\partial t} \nabla^i f \cdot \nabla^j f-\frac{2(m+1)}{m}\nabla^i f \cdot \nabla_i \frac{\partial f}{\partial t}.
\end{equation}
Adding these three expressions, we get that
\begin{equation}
\label{eqA.4}
\begin{split}
\frac{\partial R^{f,m}}{\partial t} = &\, \left [ \nabla^i\nabla^j -g^{ij}\Delta -\nabla^i f \cdot \nabla^j -\nabla^j f \cdot \nabla^i +g^{ij}\nabla_{\nabla f} -\left ( R^{f,m}\right )^{ij}\right ] \frac{\partial g_{ij}}{\partial t}\\
&\, +2\left [ \Delta -\frac{(m+1)}{m}\nabla^i f \cdot \nabla_i\right ] \frac{\partial f}{\partial t} .
\end{split}
\end{equation}

\noindent \begin{lemma}\label{lemmaA.1}
If the metric $g$, function $f$, and $m\in {\mathbb R}\cup\{ \infty\}$ are such that $R^{f,m}(g)\ge 0$ on a closed manifold $\Sigma$ of dimension $\geq 2$, and if $R^{f,m}(g)$ is not identically zero, then there is a metric ${\hat g}$ such that $R^{f,m}({\hat g})> 0$ on $\Sigma$.
\end{lemma}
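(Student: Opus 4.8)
The plan is to produce $\hat g$ within the conformal class of $g$, taking $\hat g = e^{-2\eps u}g$ for a fixed smooth $u\colon\Sigma\to\mathbb R$ and a small parameter $\eps>0$; equivalently, to run the conformal flow $g_t = e^{-2tu}g$ for a short time. Writing $n:=\dim\Sigma$ and combining the standard conformal transformation laws for $R$, $\Delta f$ and $|\nabla f|_g^2$ (the integrated form of the first-order computation in \eqref{eqA.4} specialized to the variation $\partial_t g_{ij}=-2u\,g_{ij}$), one obtains, for every admissible value of $m$,
\begin{equation*}
R^{f,m}\!\left(e^{-2\eps u}g\right) = e^{2\eps u}\left[\, R^{f,m}(g) + 2(n-1)\eps\, L_f u - (n-1)(n-2)\eps^2\,|\nabla u|_g^2 \,\right],
\end{equation*}
where $L_f u := \Delta_g u - \tfrac{n-2}{n-1}\langle\nabla f,\nabla u\rangle_g$ is the drift Laplacian attached to the weight $\tfrac{n-2}{n-1}f$. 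The parameter $m$ enters only through $R^{f,m}(g)$ and cancels between the two sides, so no condition on $(m+1)/m$ is needed; when $n=2$ the last term vanishes.

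Next I would choose $u$ to solve the linear equation $L_f u = c - R^{f,m}(g)$ on $\Sigma$, where $c := \mu(\Sigma)^{-1}\int_\Sigma R^{f,m}(g)\,d\mu$ with $d\mu := e^{-\frac{n-2}{n-1}f}\,dV_g$. The operator $L_f$ is uniformly elliptic and formally self-adjoint with respect to $d\mu$ on the closed manifold $\Sigma$, with the constants as its only kernel; since $\int_\Sigma\!\left(c-R^{f,m}(g)\right)d\mu=0$ by the choice of $c$, the equation is solvable, and $u$ is smooth by elliptic regularity. This is where the hypotheses are used: because $R^{f,m}(g)\ge 0$, $R^{f,m}(g)\not\equiv 0$, and $e^{-\frac{n-2}{n-1}f}>0$, the constant $c$ is \emph{strictly} positive.

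Substituting this $u$, the bracket above becomes $\left(1-2(n-1)\eps\right)R^{f,m}(g) + 2(n-1)\eps\,c - (n-1)(n-2)\eps^2\,|\nabla u|_g^2$. For $\eps>0$ sufficiently small (so that $1-2(n-1)\eps>0$ and the fixed positive constant $2(n-1)\eps\,c$ dominates the last term, which is $O(\eps^2)$ with $|\nabla u|_g$ bounded on the compact $\Sigma$), the first term is nonnegative, the second strictly positive, and the third controlled; hence the bracket, and therefore $R^{f,m}(\hat g)$ with $\hat g=e^{-2\eps u}g$, is positive everywhere. In dimension $2$ the same argument applies and is simpler, as only the first two terms remain and one only needs $\eps<\tfrac12$. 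I expect the only point that needs care is the solvability step, together with the bookkeeping that the Fredholm condition pins $c$ to the $d\mu$-mean of $R^{f,m}(g)$ and that this mean is positive precisely under the stated hypotheses; everything else is the conformal transformation law, which is the integrated form of \eqref{eqA.4}.
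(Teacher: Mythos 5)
Your argument is correct, but it takes a genuinely different route from the paper's. The paper runs the nonlinear Yamabe-type flow $\partial_t g_t = -R^{f,m}(g_t)\,g_t$ with $f$ fixed, derives the reaction--diffusion equation \eqref{eqA.6} for $R^{f,m}$ along the flow, and invokes short-time existence together with the strong parabolic maximum principle to get strict positivity for $t>0$. You instead make a single conformal perturbation $\hat g=e^{-2\eps u}g$ and reduce everything to one \emph{linear} elliptic problem. Your transformation law is correct: combining $R_{\tilde g}=e^{2\eps u}\bigl(R_g+2(n-1)\eps\Delta_g u-(n-1)(n-2)\eps^2|\nabla u|_g^2\bigr)$ with $\Delta_{\tilde g}f=e^{2\eps u}\bigl(\Delta_g f-(n-2)\eps\langle\nabla u,\nabla f\rangle_g\bigr)$ and $|\nabla f|^2_{\tilde g}=e^{2\eps u}|\nabla f|^2_g$ gives exactly your displayed identity, and the $m$-dependence indeed enters only through $R^{f,m}(g)$ because the $|\nabla f|^2$ term merely rescales. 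The operator $L_f$ is the drift Laplacian for the weight $\tfrac{n-2}{n-1}f$, self-adjoint with respect to $e^{-\frac{n-2}{n-1}f}dV_g$, so the Fredholm alternative forces $c$ to be the weighted mean of $R^{f,m}(g)$, which is strictly positive under the hypotheses; the final $O(\eps)$ versus $O(\eps^2)$ comparison on the compact $\Sigma$ is sound. What your route buys is elementariness and constructiveness (no parabolic short-time existence, no strong maximum principle --- a Kazdan--Warner-style deformation); what the paper's route buys is brevity, given that the variation formulas \eqref{eqA.1}--\eqref{eqA.4} are computed anyway. One small remark: your step ``constants are the only kernel'' uses connectedness of $\Sigma$, but so does the paper's strong-maximum-principle dichotomy, and connectedness is in any case needed for the lemma to be true, so this is not a gap specific to your argument.
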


\begin{proof}
Consider \eqref{eqA.4} evaluated along the curve of metrics and functions defined by integrating
\begin{equation}
\label{eqA.5}
\frac{\partial g_t}{\partial t} = -R^{f,m}(g_t) g_t,\quad g_0=g,\quad f_t=f_0=f.
\end{equation}
Then $\frac{\partial f}{\partial t}=0$ and then \eqref{eqA.4} yields
\begin{equation}
\label{eqA.6}
\frac{\partial R^{f,m}}{\partial t} =(n-1)\Delta R^{f,m} -(n-2)\nabla_{\nabla f} R^{f,m} +\left ( R^{f,m}\right )^2.
\end{equation}
This equation is uniformly parabolic, hence the flow exists for $t\in [0,T)$ with some $T\in (0,\infty]$. Given $R^{f,m}(p,0)\ge 0$ for all $p\in \Sigma$, the weak maximum principle yields that $R^{f,m}(p,t)\ge 0$ for all $p\in \Sigma$ and all $t\ge 0$ in the domain of the flow defined by \eqref{eqA.5}. But then the strong maximum principle \cite[Theorem 6.54, pp 245--246]{CLN} (see \cite[Proposition 1.1, p 416]{Taylor} or \cite[Theorem 4, p 172]{PW} for proofs) asserts that either $R^{f,m}(p,t)> 0$ for all $p\in \Sigma$ and all $t\in [0,T)$, and then the desired metric ${\hat g}$ is any of the metric $g_t$ with $t>0$, or $R^{f,m}(\cdot,\cdot)$ is identically zero. The latter is only possible if $R^{f,m}(\cdot,0)$ is identically zero.
\end{proof}

\section{Conformal transformations}\label{appendixB}
\setcounter{equation}{0}

\noindent If
\begin{equation}
\label{B.1}
{\tilde g}=e^{-2af/(n-2)}g,\quad f\in C^{\infty}(M),\quad a\in {\mathbb R},
\end{equation}
then
\begin{equation}
\label{eqB.2}
\ric_{\tilde g}=\ric_g+a\hess_g f +\frac{a^2}{n-2} df\otimes df +\frac{a}{n-2}\left [ \Delta_g f -a|df|_g^2\right ] g.
\end{equation}
For any $u\in C^{\infty}(M)$, we have
\begin{equation}
\label{eqB.3}
\hess_{\tilde g} u = \hess_g u +\frac{a}{n-2}\left [ df\otimes du + du\otimes df -g(\nabla f,\nabla u)g\right ].
\end{equation}

Let $u=(1-a)f$, $a\neq 1$. Then we obtain from the above formulas that
\begin{equation}
\label{eqB.4}
\begin{split}
\ric_{\tilde g} +\hess_{\tilde g} u =&\, \ric_g+\hess_g f - \frac{a(a-2)}{(n-2)}df\otimes df + \frac{a}{n-2}\left [ \Delta_g f -|df|_g^2\right ]g\\
=&\, \ric_g+\hess_g f - \frac{a(a-2)}{(n-2)}df\otimes df + \frac{a}{n-2}\left ( \Delta_{g,f}f\right ) g,
\end{split}
\end{equation}
where $\Delta_{g,f}$ denotes the drift Laplacian defined in \eqref{eq5.3}. Then
\begin{equation}
\label{eqB.5}
\begin{split}
\ric_{\tilde g}^{u,{\tilde m}}=&\, \ric_g+\hess_g f - \left [ \frac{(1-a)^2}{\tilde m} +\frac{a(a-2)}{n-2}\right ] df\otimes df
+ \frac{a}{n-2}\left ( \Delta_{g,f}f\right ) g \\
=&\, \ric_g^{f,m} + \frac{a}{n-2}\left ( \Delta_{g,f}f\right ) g,
\end{split}
\end{equation}
where
\begin{equation}
\label{eqB.6}
\frac{1}{m}:= \frac{(1-a)^2}{\tilde m} +\frac{a(a-2)}{n-2} .
\end{equation}
It is not hard to see that for any pair $m,{\tilde m}\in (-\infty,2-n)\cup (0,\infty)$, there is a choice of $a$ that will map ${\tilde m}$ to $m$ via the formula \ref{eqB.6}. The relation can be inverted to map any $m$ in this domain to ${\tilde m}$ in the same domain. Note that of course such a choice of $a$ also fixes the conformal transformation relating $g$ and ${\tilde g}$ and the scale transformation relating $f$ and $u$.

We now consider the singular cases. The first such case arises for $a=2$, and then $m={\tilde m}$. The next singular case is when $m=0$, which means that $f=const$, so $u=(1-a)f=const$ unless $a=1$, and so if $a\neq 0$ then we may take ${\tilde m}=0$. Likewise if we begin from ${\tilde m}=0$ and choose $a\neq 1$, we conclude that we can set $m=0$.

Finally, we consider $a=1$. Then $u=0$, so we can return to equation \eqref{eqB.2}. It yields
\begin{equation}
\label{eqB.7}
\begin{split}
\ric_{\tilde g}=&\, \ric_g+\hess_g f +\frac{df\otimes df}{n-2} +\frac{\Delta_{g,f} f}{n-2} g\\
=&\, \ric_g^{f,2-n} +\frac{\Delta_{g,f} f}{n-2} g .
\end{split}
\end{equation}

\section{Evaluation of $\nabla_{\nu}\ell$}\label{appendixC}
\setcounter{equation}{0}
In this section, we evaluate the term $\nabla_\nu \ell \cdot \nabla f$ which first appears in equation \eqref{eq4.5}. We will substitute this evaluation into equation\eqref{eq4.20}.

Recall the set up: $V$ is a spacelike hypersurface in a weighted spacetime $(M,g,f)$ with future-directed unit normal $w$. $\Sigma$ is a closed two-sided hypersurface in $V$ with normal $\nu$ on $\Sigma$ denoting the outward direction. For $\varphi \in C^\infty(\Sigma)$, we consider variations $\Sigma_t$ of $\Sigma$ within $V$ given by
\begin{equation}
\label{eqC.1}
\Sigma\ni p \mapsto \exp_p\left ( t\varphi(p)\nu(p)\right ) \in V,
\end{equation}
where $\exp_p$ is the exponential map derived by the Riemannian metric on $V$.

Let $X$ be the vector field on $V$ whose integral curves are precisely given by \eqref{eqC.1}. Note that
\begin{equation}
\label{eqC.2}
X|_\Sigma = \varphi \nu.
\end{equation}
Set $\ell = w + \nu$. Then $\left \{ \ell, \nu, \partial_3, \dotsc, \partial_n\right \}$ is a basis for $TM$ at $\Sigma$ where $\left \{ \partial_3, \dotsc, \partial_n\right \}$ is a coordinate basis for $T\Sigma$. Extend this basis to the variation $\Sigma_t$ by extending $\nu$ to $\nu_t \in T\Sigma_t^\perp$ and $\ell_t = w + \nu_t$, and extend $\partial_i$ so that it remains a coordinate basis for $T\Sigma_t$ by Lie-dragging it along the integral curves of $X$, i.e.,
\begin{equation}
\label{eqC.3}
\pounds_X\partial_i = 0, \quad i = 3, \dotsc, n.
\end{equation}
Therefore $\ell_t$ remains orthogonal to $\partial_i$ and thus
\begin{equation}
\label{eqC.4}
\begin{split}
0=&\, \nabla_X\left (\ell_t\cdot \partial_i\right ) = \pounds_X\left ( \ell_t \cdot \partial_i\right ) = \partial_i \cdot\pounds_X \ell_t + \cancel{\ell_t \cdot \pounds_X\partial_i} + \pounds_Xg \left (\ell_t,\partial_i\right )
\\
=&\, \partial_i \cdot \left ( \nabla_X \ell_t - \nabla_{\ell_t}X\right ) + \partial_i \cdot \nabla_{\ell_t}X + \ell_t \cdot \nabla_{\partial_i}X
\\
=&\, \partial_i \cdot \nabla_X \ell_t + \ell_t \cdot \nabla_{\partial_i}X.
\end{split}
\end{equation}
(In the penultimate equality, $\nabla_{\ell_t}X$ only makes sense when $X$ is defined on a spacetime neighbourhood of $\Sigma$. But the term cancels, so the result does not depend on the choice of smooth extension of $X$ off of $\Sigma$.)

Evaluating \eqref{eqC.4} at $t = 0$ gives
\begin{equation}
\label{eqC.5}
0 = \varphi\partial_i \cdot \nabla_\nu \ell + \ell \cdot \nabla_{\partial_i}(\varphi \nu) \,=\, \varphi\partial_i \cdot \nabla_\nu \ell + \ell \cdot \big ( (\partial_i\varphi) \nu + \varphi \nabla_{\partial_i}\nu \big ) .
\end{equation}
Since $\nu$ is normalized, $\nabla_{\partial_i}\nu \in \text{span}\left \{ w, \partial_3, \dotsc, \partial_n\right \}$. Hence $\ell \cdot \nabla_{\partial_i} \nu = w \cdot \nabla_{\partial_i} \nu$. Also $\ell \cdot \nu =  1$. Therefore
\begin{equation}
\label{eqC.6}
\partial_i \cdot \nabla_\nu\ell = -\partial_i \log\varphi -  w\cdot \nabla_{\partial_i}\nu.
\end{equation}

\begin{lemma}\label{lemmaC.1}
If $\nabla_\ell f = 0$ on $\Sigma$, then
\begin{equation}
\label{eqC.7}
\nabla_{\nu}\ell\cdot \nabla f \big\vert_{\Sigma} = -D f \cdot D\log\varphi +  K(Df, \nu),
\end{equation}
where $D$ denotes the gradient with respect to the metric on $\Sigma$ and $K$ is the second fundamental form of $V$ within $M$.
\end{lemma}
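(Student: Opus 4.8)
The plan is to decompose the spacetime gradient $\nabla f$ along $\Sigma$ in the adapted frame $\{\ell,\nu,\partial_3,\dots,\partial_n\}$ introduced above and then substitute the result into \eqref{eqC.6}. First I would record the pairings of the frame vectors along $\Sigma$: since $g(w,w)=-1$, $g(\nu,\nu)=1$, $g(w,\nu)=0$ and $w,\nu\perp T\Sigma$, one gets $g(\ell,\ell)=0$, $g(\ell,\nu)=1$, $g(\ell,\partial_i)=g(\nu,\partial_i)=0$ and $g(\partial_i,\partial_j)=h_{ij}$. Writing $\nabla f|_\Sigma=a\,\ell+b\,\nu+\sum_i (Df)^i\partial_i$ and pairing successively with $\ell$, $\nu$ and $\partial_j$ yields $b=\nabla_\ell f$, $a=\nabla_\nu f-\nabla_\ell f$ and $(Df)^i=h^{ij}\partial_j f$; under the hypothesis $\nabla_\ell f=0$ this collapses to $\nabla f|_\Sigma=(\nabla_\nu f)\,\ell+Df$, where $Df=\sum_i (Df)^i\partial_i$ is the intrinsic gradient of $f|_\Sigma$ on $\Sigma$.

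Next I would expand $\nabla_\nu\ell\cdot\nabla f=(\nabla_\nu f)(\nabla_\nu\ell\cdot\ell)+\nabla_\nu\ell\cdot Df$. The first term drops out: the extended field $\ell_t=w+\nu_t$ is null throughout the variation (because $\nu_t$ is a unit vector tangent to $V$ while $w$ is the future unit normal to $V$), so $g(\ell,\ell)\equiv 0$ in a neighbourhood of $\Sigma$ within $V$, and hence $\nabla_\nu\ell\cdot\ell=\tfrac12\,\nabla_\nu g(\ell,\ell)=0$. For the second term, since $Df$ is tangent to $\Sigma$ I would contract \eqref{eqC.6} against the components of $Df$, which gives $\nabla_\nu\ell\cdot Df=-Df\cdot D\log\varphi-\sum_i (Df)^i\,g(w,\nabla_{\partial_i}\nu)$.

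Finally, differentiating the identity $g(w,\nu)=0$ along $\partial_i\in T\Sigma\subset TV$ shows $g(w,\nabla_{\partial_i}\nu)=-g(\nu,\nabla_{\partial_i}w)=-K(\partial_i,\nu)$, where $K(X,Y)=g(\nabla_X w,Y)$ is the second fundamental form of $V$ in $M$ in the sign convention of Section \ref{section4}; summing against $(Df)^i$ then turns the last term into $+K(Df,\nu)$, and combining the three steps yields \eqref{eqC.7}. I do not anticipate a serious obstacle here: the computation is a short exercise in frame bookkeeping, and the only delicate points are recognizing that the hypothesis $\nabla_\ell f=0$ is exactly what annihilates both the $\nu$-component of $\nabla f$ and the $\ell$-term in $\nabla_\nu\ell\cdot\nabla f$, and keeping the sign convention for $K$ consistent with \eqref{eq4.4}.
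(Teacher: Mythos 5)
Your argument is correct and essentially the same as the paper's: both start from \eqref{eqC.6}, kill the normal ($\ell$-proportional) part of $\nabla f$ using $\nabla_\ell f=0$ together with $\ell\cdot\nabla_\nu\ell=0$ from nullity of $\ell_t$, and identify $-w\cdot\nabla_{Df}\nu$ with $K(Df,\nu)$. The only cosmetic difference is that you decompose $\nabla f$ explicitly in the frame $\{\ell,\nu,\partial_i\}$, whereas the paper writes $\nabla f=\nabla_\Sigma^\perp f+Df$ and argues that $\nabla_\Sigma^\perp f$ is parallel to $\ell$.
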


\begin{proof}
From $\eqref{eqC.6}$ we immediately have
\begin{equation}
\label{eqC.8}
\nabla_{\nu}\ell\cdot \nabla f \big\vert_{\Sigma} = \nabla_\Sigma^\perp f \cdot \nabla_\nu \ell -D f \cdot D\log\varphi -  w\cdot \nabla_{Df}\nu,
\end{equation}
where $\nabla^\perp_\Sigma  = \nabla  - D$. The third term on the right equals $ K(Df, \nu)$ by definition. Furthermore,
\begin{equation}
    \label{eqC.9}
    0 = \nabla_\ell f = \nabla f \cdot \ell = \nabla_\Sigma^\perp f \cdot \ell ,
\end{equation}
so $\nabla_\Sigma^\perp f$ is parallel to $\ell$. But $\ell \cdot \nabla_\nu \ell = 0$ since $\ell_t \cdot \ell_t = 0$. Hence the first term on the right of \eqref{eqC.8} vanishes.
\end{proof}


\begin{thebibliography}{99}
\bibitem{AMS} L Andersson, M Mars, and W Simon, \emph{Stability of marginally outer trapped surfaces and existence of marginally outer trapped tubes}, Adv Theor Math Phys 12 (2008) 853--888.
\bibitem{AM} L Andersson and J Metzger, \emph{The area of horizons and the trapped region}, Commun Math Phys 290 (2009) 941--972.
\bibitem{BO} J Baldauf and T Ozuch, \emph{Spinors and mass on weighted manifolds}, Commun Math Phys 394 (2022) 1153--1172.
\bibitem{Bourguignon} J-P Bourguignon, \emph{Ricci curvature and Einstein metrics}, in \emph{Global differential geometry and global analysis}, eds D Ferus, W K\"uhnel, U Simon, and B Wegner; Lecture notes in mathematics 838 (Springer, Berlin, 1981).
\bibitem{Case} JS Case, \emph{Singularity theorems and the Lorentzian splitting theorem for the Bakry--\'Emery--Ricci tensor}, J Geom Phys 60 (2010) 477--490.
\bibitem{CM} F Cavalletti and A Mondino, \emph{Optimal transport in Lorentzian synthetic spaces, synthetic timelike Ricci curvature lower bounds and applications}, Camb J Math 12 (2) (2024) 417--534.
\bibitem{CMM1} F Cavalletti, D Manini, and A Mondino, \emph{Optimal transport on null hypersurfaces and the null energy condition}, Commun Math Phys 406 (2025) article 212.
\bibitem{CMM2} F Cavalletti, D Manini, and A Mondino, \emph{On the geometry of synthetic null hypersurfaces}, preprint [arxiv:2506/04934].
\bibitem{Chodosh} O Chodosh, \emph{Stable minimal surfaces and positive scalar curvature}, (unpublished) https://web.stanford.edu/~ochodosh/Math258-min-surf.pdf.
\bibitem{CL} C Chodosh and C Li, \emph{Generalized soap bubbles and the topology of manifolds with positive scalar curvature}, Ann Math (2) 199 (2024) 707--740.
\bibitem{CLN} B Chow, P Lu, and L Ni, \emph{Hamilton's Ricci flow}, Graduate studies in mathematics Vol 77 (AMS, Providence, 2006).
\bibitem{CGS} PT Chru\'sciel, GJ Galloway, and D Solis, \emph{Topological censorship for Kaluza-Klein space-times}, Ann H Poincar\'e 10 (2009) 893--912.
\bibitem{EGP} M Eichmair, GJ Galloway, and D Pollack, \emph{Topological censorship from the initial data point of view}, J Differ Geom 95 (2013) 389--405.
\bibitem{Galloway1} GJ Galloway, \emph{Maximum principles for null hypersurfaces and null splitting theorems}, Annales Henri Poincar\'e 1 (2000) 543--567.
\bibitem{Galloway2} GJ Galloway, \emph{Null Geometry and the Einstein Equations}, in \emph{The Einstein Equations and the Large Scale Behavior of Gravitational Fields}, ed PT Chru\'sciel and H Friedrich (Birkh\"auser, Basel, 2004).
\bibitem{Galloway3} GJ Galloway, \emph{Rigidity of marginally trapped surfaces and the topology of black holes}, Commun Anal Geom 16 (1) (2008) 217--229.
\bibitem{Galloway4} GJ Galloway, \emph{Rigidity of outermost MOTS: the initial data version}, Gen Relativ Gravit 50 (2018) 32.
\bibitem{GL}{GJ Galloway and E Ling}, \emph{Rigidity aspects of Penrose's singularity theorem}, Commun Math Phys 406 (2025) 25.
\bibitem{GS} GJ Galloway and R Schoen, \emph{A generalization of Hawking's black hole topology theorem to higher dimensions}, Commun Math Phys 266 (2006) 571--576.
\bibitem{Gannon1} D Gannon, \emph{Singularities in non-simply connected spacetimes}, J Math Phys 51 (1975).
\bibitem{Gannon2} D Gannon, \emph{On the topology of spacelike hypersurfaces, singularities, and black holes}, Gen Relativ Gravit 7 (1976).
\bibitem{Gromov} M Gromov, \emph{No metrics with positive scalar curvatures on aspherical 5-manifolds}, preprint [arxiv:2009.05332].
\bibitem{GromovLawson} M Gromov and HB Lawson, \emph{Positive scalar curvature and the Dirac operator on complete Riemannian manifolds}, Publications math\'ematiques de l’IH\'ES 58 (1983) 83--196.
\bibitem{HE} SW Hawking and GFR Ellis, \emph{The large scale structure of space-time}, (Cambridge University Press, Cambridge, 1973).
\bibitem{Ketterer} C Ketterer, \emph{Characterization of the null energy condition via displacement convexity of entropy}, J Lond Math Soc 109 (2024) e12846 pp 1--24.
\bibitem{CWLee} C W Lee, \emph{A restriction on the topology of Cauchy surfaces in General Relativity}, Commun Math Phys 51 (1976).
\bibitem{Lee} D Lee, \emph{Geometric relativity}, Graduate studies in mathematics 201 (American Mathematical Society, Providence, 2019).
\bibitem{Li-Zhang} C Li and B Zhang, \emph{Covering instability for the existence of positive scalar curvature metrics}, arXiv:2506.13885 (2025).
\bibitem{McCann} RJ McCann, \emph{A synthetic null energy condition}, Commun Math Phys 405--2 (2024) article 38.
\bibitem{Negami} S Negami, \emph{Irreducible 3-manifolds of genus 3 containing a 2-sided projective plane}, Proc of American Math Society, 89(3) (1983).
\bibitem{Newman} RPAC Newman, \emph{Topology and stability of marginal 2-surfaces}, Class Quantum Gravit 4 (1987) 277--290.
\bibitem{Perelman1} G Perelman, \emph{The entropy formula for Ricci flow and its geometric applications}, preprint [arxiv:math.DG/0211159].
\bibitem{Perelman2} G Perelman, \emph{ Ricci flow with surgery on three-manifolds}, preprint [arxiv:math.DG/0303109].
\bibitem{PW} MH Protter and HF Weinberger, \emph{Maximum principles in differential equations}, (Springer, New York, 1984).
\bibitem{Rade} D R\"ade, \emph{Scalar and mean curvature comparison via $\mu$-bubbles}, Calc Var PDE, 62 (187) (2023).
\bibitem{Rosenberg} J Rosenberg, \emph{Manifolds of positive scalar curvature: A progress report}, in \emph{Surveys in differential geometry} vol XI, ed J Cheeger and K Grove, pp 259--294 (International Press, Somerville MA USA, 2007).
\bibitem{RX} S Rosenberg and J Xu, \emph{A codimension two approach to the $\mathbb{S}^1$-stability conjecture}, preprint [arxiv:2412.12479].
\bibitem{RW} M Rupert and E Woolgar, \emph{Bakry--\'Emery black holes}, Class Quantum Gravit 31 (2014) 025008.
\bibitem{Sbierski2018}
Jan Sbierski, \emph{{The ${C^0}$-inextendibility of the {S}chwarzschild
  spacetime and the spacelike diameter in {L}orentzian geometry}}, J. Diff. Geom. \textbf{108} (2018), no.~2, 319 -- 378.
\bibitem{Swarup} G A Swarup, \emph{Projective Planes in Irreducible 3-Manifolds}, Mathematische Zeitschrift, 132(4) (1973) 305–317.
\bibitem{Taylor} ME Taylor, \emph{Partial differential equations I: Basic theory}, Applied mathematical sciences vol 115 (Springer, New York, 1996).
\bibitem{WW} E Woolgar and W Wylie, \emph{Curvature-dimension bounds for Lorentzian splitting theorems}, J Geom Phys 132 (2018) 131--145.
\end{thebibliography}
\end{document}